\newcolumntype{x}[1]{>{\centering\arraybackslash\hspace{0pt}}p{#1}}
\theoremstyle{definition}
\newtheorem{theorem}{Theorem}[section]
\newtheorem{definition}[theorem]{{{Definition}}}
\newtheorem{ex}[theorem]{{{Example}}}
\newtheorem{remark}[theorem]{{{Remark}}}
\newtheorem{corollary}[theorem]{{{Corollary}}}%[theorem]
\newtheorem{proposition}[theorem]{{{Proposition}}}
\newcommand{\numberset}{\mathbb}
\newcommand{\F}{\numberset{F}}
\newcommand{\Mat}{\mbox{Mat}}
\newcommand{\C}{\mathcal{C}}
\newcommand{\U}{\mathcal{U}}
\newcommand{\wt}{\textnormal{wt}}
\newcommand{\Fq}{\F_q}
\newcommand{\mS}{\mathcal{S}}
\newcommand{\rk}{\textnormal{rk}}
\newcommand{\dr}{\dd_{\rk}}
\newcommand{\dS}{\dd_{\mathrm{S}}}
\DeclareMathOperator{\GL}{GL}
\DeclareMathOperator{\PG}{PG}
\DeclareMathOperator{\rowsp}{rowsp}
\DeclareMathOperator{\dd}{d}
\DeclareMathOperator{\Gr}{Gr}
\DeclareMathOperator{\N}{N}
\newcommand{\bF}{\mathbb{F}}
\newcommand{\bP}{\mathbb{P}}
\newcommand{\cC}{\mathcal{C}}
\newcommand{\cF}{\mathcal{F}}
\newcommand{\cP}{\mathcal{P}}
\newcommand{\cS}{\mathcal{S}}
\newcommand{\cU}{\mathcal{U}}
\newcommand{\cW}{\mathcal{W}}
\title{Schubert Subspace Codes}
\author{Gianira N. Alfarano$^1$}
\author{Joachim Rosenthal$^2$}
\author{Beatrice Toesca$^2$}
\address{$^1$University College Dublin, Ireland.}
\address{$^2$University of Zurich, Switzerland.}
\email{gianira.alfarano@gmail.com, rosenthal@math.uzh.ch, beatrice.toesca@math.uzh.ch}
\begin{document}

\begin{abstract}
    In this paper, we initiate the study of constant dimension subspace codes restricted to Schubert varieties, which we call Schubert subspace codes. These codes have a very natural geometric description, as objects that we call intersecting sets with respect to a fixed subspace. We provide a geometric construction of maximum size constant dimension subspace codes in some Schubert varieties with the largest possible value for the minimum subspace distance. Finally, we generalize the problem to different values of the minimum distance.
\end{abstract}

\maketitle

%%%%%%%%%%%%%%%%%%%%%%%%%%%%%%%%%%%%%%%%%%%%%%%%%%%%%%%%%%%%%%%%%%%%%%%%%%%%%%%%%%%%%%%%%%%
%%%%%%%%%%%%%%%%%%%%%%%%%%%%%%%%%%%%%%%%%%%%%%%%%%%%%%%%%%%%%%%%%%%%%%%%%%%%%%%%%%%%%%%%%%%
\section{Introduction}
%%%%%%%%%%%%%%%%%%%%%%%%%%%%%%%%%%%%%%%%%%%%%%%%%%%%%%%%%%%%%%%%%%%%%%%%%%%%%%%%%%%%%%%%%%%
In algebraic coding theory, many tools from algebraic geometry have been used
in order to construct optimal codes whose codewords are all far apart under some metric. 

An important starting point for these connections was made by Goppa, who showed how algebraic curves over a finite field with many rational points and low genus lead 
to linear block codes having a comparatively large distance; see~\cite{go81}.

Recall that the {\em Grassmann variety} or \emph{Grassmannian} $\Gr(k,V)$  consists of all $k$-dimensional linear subspaces of some vector space $V$. It is well known that when the vector space $V$ has finite dimension $n$, then  $\Gr(k,V)$ is a smooth projective variety of dimension $k(n-k)$. When the base field is the finite field $\Fq$ and the vector space is $V=\Fq^n$, we will use the notation  $\Gr_q(k,n)$ to denote the Grassmannian $\Gr(k,\Fq^n)$.

It was Ryan who first showed how to construct good linear codes over the binary field $\F_2$ using the Grassmannian $\Gr_2(k,n)$; see~\cite{ry87}. Such codes are now usually referred to as {\em Grassmann codes}. Later, Nogin used the concept of a projective system to generalize the definition of Grassmann codes to arbitrary finite fields; see~\cite{no96}. He also established that the minimum distance of Grassmann codes is remarkably large, namely $q^d$, where $d$ is the dimension of the Grassmannian $\Gr_q(k,n)$. In addition, Nogin established bounds for the higher weights of Grassmann codes.

Ghorpade and Lachaud, then, used the theory of hyperplane sections of Grassmannians to come up with an elegant derivation of Nogin's results; see~\cite{gh01}.
The geometric approach also allowed the derivation of a bound for the number of MDS codes of a certain size.

Important subvarieties of the Grassmann variety are the so-called Schubert varieties. 
We will give the exact definition of a Schubert variety in the next section. 

In the fundamental paper~\cite{gh05}, Ghorpade and Tsfasman used the language of projective systems to describe linear codes in a natural way starting with some projective variety. 
This then allowed them to generalize the notion of Grassmann codes to the notion of a
{\em Schubert code}, opening up many interesting research questions relating Schubert codes to combinatorial questions arising also in Schubert calculus. The interested reader is referred to the paper by Ghorpade and Singh~\cite{gh18}, where many properties of Schubert codes are described by geometric means.

There is a second natural connection between Grassmann varieties and coding theory. 
As shown by K\"otter and Kschischang in \cite{ko08}, it is natural in the area of network coding to consider the elements of the Grassmannian $\Gr_q(k,n)$  as codewords of a so-called constant dimension subspace code. In this case, the distance between codewords is not the Hamming distance but rather the subspace distance, which we will define in the next section. One immediately has a lot of interesting coding questions that are geometrically related to the geometry of the Grassmann variety. 

As it is natural (and interesting!) to go from Grassmann codes to Schubert codes, we demonstrate in this paper that it is interesting to consider subspace codes which are all contained in a fixed Schubert variety.
\bigskip

%%%%%%%%%%%%%%%%%%%%%%%%%%%%%%%%%%%%%%%%%%%%%%%%%%%%%%%%%%%%%%%%%%%%%%%%%%%%%%%%%%%%%%%%%%%
\paragraph{\textbf{Our Contribution}} In this paper we introduce and study \emph{Schubert subspace codes}, i.e. constant dimension subspace codes in the Grassmannian $\Gr_q(k,rk)$, restricted to a given Schubert variety. We introduce the geometric concept of \emph{intersecting set with respect to a given subspace} and we show that studying these objects is equivalent to constructing Schubert subspace codes with bounded minimum distance. We focus mostly on codes in $\Gr_q(k,rk)$ with the largest possible minimum subspace distance, i.e.\ $2k$. Using the theory of linear sets, we provide a geometric construction of such codes with maximum size when $u\leq \frac{rk}{2}$. 
In this case, we compare our construction with the multilevel one, proposed by Etzion and Silberstein in \cite{etzion2009error}.  
We show that the latter does not give a maximum size subspace code in case $r<u$. Hence, in this case, our construction provides a larger code. On the contrary, when $r\geq u$, both constructions can achieve the maximum possible size. We generalize the problem in order to allow different values for the minimum distance. However, in this case, things get more involved and complicated to handle.
\bigskip

%%%%%%%%%%%%%%%%%%%%%%%%%%%%%%%%%%%%%%%%%%%%%%%%%%%%%%%%%%%%%%%%%%%%%%%%%%%%%%%%%%%%%%%%%%%
\paragraph{\textbf{Outline}} The paper is organized as follows. In \cref{sec:prel} we provide the necessary background for the rest of the paper. In \cref{sec:prob_formulation} we introduce the problem of finding bounds on the size $m_q$ of constant dimension subspace codes restricted to Schubert varieties and establish the first results and constructions. Moreover, we compare our constructions with the multilevel one. In \cref{sec:generalization} we explain a more general problem and we provide a lower bound on $m_q$ given by the multilevel construction. We conclude in \cref{sec:conclusion} with several open questions.

%%%%%%%%%%%%%%%%%%%%%%%%%%%%%%%%%%%%%%%%%%%%%%%%%%%%%%%%%%%%%%%%%%%%%%%%%%%%%%%%%%%%%%%%%%%
%%%%%%%%%%%%%%%%%%%%%%%%%%%%%%%%%%%%%%%%%%%%%%%%%%%%%%%%%%%%%%%%%%%%%%%%%%%%%%%%%%%%%%%%%%%
\section{Preliminaries}\label{sec:prel}
%%%%%%%%%%%%%%%%%%%%%%%%%%%%%%%%%%%%%%%%%%%%%%%%%%%%%%%%%%%%%%%%%%%%%%%%%%%%%%%%%%%%%%%%%%%

In this section, we provide the background material for the rest of the paper. In particular, we present some well-known facts about rank-metric codes, Ferrers diagrams, Schubert varieties and subspace codes.

Throughout the paper, $\F_q$ denotes the finite field of order $q$, and $k,n\geq 2$ are fixed positive integers, with $k\leq n$. 
%For $i \in \mathbb{N}$ we let $[i]:=\{j \in \mathbb{N} \, : \, 1 \le j \le i\}$. 
We denote by $\Mat(\ell,m,\F_q)$ the space of $\ell \times m$ matrices with entries in $\F_q$ and, in case of square $\ell\times \ell$ matrices, we simply write $\Mat(\ell,\F_q)$.
For a matrix $M\in \Mat(\ell,m,\F_q)$ we denote by $\mathrm{rowsp}(M)$ the rowspace of $M$ over $\F_q$, that is the $\F_q$-subspace of $\F_q^m$ generated by the rows of $M$. 
We denote by $\bP_q^{t-1}$ the $(t-1)$-dimensional projective space with underlying vector space $\bF_q^{t}$. Notice that this corresponds to what is often denoted as $\PG(t-1,q)$ in the finite geometry literature. We further denote by $\Gr_q(k,n)$ the Grassmannian, i.e.\ the set of all the $k$-dimensional subspaces of~$\F_q^{n}$, and by $\binom{n}{k}_q$ the Gaussian binomial coefficient representing its cardinality. Note that $\bP_{q}^{n-1}=\Gr_q(1,n)$. Finally, given $V\in\Gr_q(k,n)$, we denote by $\bP(V)$ its associated projective subspace.

%%%%%%%%%%%%%%%%%%%%%%%%%%%%%%%%%%%%%%%%%%%%%%%%%%%%%%%%%%%%%%%%%%%%%%%%%%%%%%%%%%%%%%%%%%%
\subsection{Rank-Metric Codes}\label{subsec:rank-metric}
Rank-metric codes were originally introduced by Delsarte in~\cite{de78} and have been intensively investigated in recent years because of their applications in network coding \cite{silva2008rank}. 
We endow the space $\Mat(n, m,\F_q)$ with the \textbf{rank metric}, defined as
\[\dr(A,B) = \mathrm{rk}\,(A-B), \qquad \mbox{ for every } A,B\in\Mat(n, m,\F_q).\]
 An $[n\times m,M]_q$ \textbf{rank-metric code} $\C$ is a subset of $\Mat(n,m,\F_q)$ of size $M$. The \textbf{minimum rank distance} of $\C$ is defined as
 \begin{align*}
     \dr(\C):= &\min\{ \dr(A,B) \mid A,B \in \C,\,\, A\neq B \}.
 \end{align*}
If $\dr(\C)=\delta$ is known, we write that $\C$ is an $[n\times m,M,\delta]_q$ code.
The parameters $n,m,M,\delta$ of an $[n\times m,M,\delta]_q$ rank-metric code satisfy a Singleton-like bound \cite{de78}, that reads~as
\[ M \leq q^{\max\{m,n\}(\min\{m,n\}-\delta+1)}. \]
When equality holds, we say that $\C$ is a \textbf{maximum rank distance} code, or \textbf{MRD} for short.

If a rank-metric code $\C$ is a linear $\F_q$-subspace of $\Mat(n,m,\F_q)$, then its cardinality is $M=q^k$, where $k$ is the $\F_q$-dimension of $\C$. In this case, we write that $\C$ is an $[n\times m,k]_q$ \textbf{linear rank-metric code}.
Furthermore, if  $\C$ is in addition $\F_{q^m}$-linear, we can represent it as an $\F_{q^m}$-subspace of $\F_{q^m}^n$. If $\dim_{\F_{q^m}}(\C)=k$, then we write that $\C$ is an $[n,k]_{q^m/q}$ \textbf{linear} rank-metric code. In this case, $\C=\rowsp(G)$ for some $G\in\Mat(k,n,\F_{q^m})$ called \textbf{generator matrix} of $\C$. If the columns of one (and hence any) generator matrix of $\C$ are $\F_q$-linearly independent, then $\C$ is said to be \textbf{nondegenerate}. 
Two $[n,k]_{q^m/q}$ rank-metric codes $\C$ and $\C'$ are \textbf{equivalent} if and only if
there exists $A \in \mathrm{GL}(n,\F_q)$ such that
$\C'=\C A=\{vA : v \in \C\}$.

There is a geometric interpretation of $[n,k]_{q^m/q}$ linear rank-metric codes as $q$-\textit{systems}. This concept will be useful for the construction in \cref{sec:prob_formulation}.

\begin{definition}
An $[n,k]_{q^m/q}$ \textbf{system} $\U$ is an $\F_q$-subspace of $\F_{q^m}^k$ of dimension $n$.
Moreover, two $[n,k,]_{q^m/q}$ systems $\U$ and $\U'$ are \textbf{equivalent} if there exists an $\F_{q^m}$-isomorphism $\phi\in\GL(k,\F_{q^m})$ such that $ \phi(\U) = \U'$.
If the parameters are clear from the context or not relevant, we simply say that $\U$ is a $q$-\textbf{system}.
\end{definition}

In \cite{sheekey2019scatterd, randrianarisoa2020geometric} it has been proven that there is a one-to-one correspondence between equivalence classes of nondegenerate $[n,k]_{q^m/q}$ codes and equivalence classes of $[n,k]_{q^m/q}$ systems and the correspondence can be explained as follows.
Let $\C$ be an $[n,k]_{q^m/q}$ code and $G$ be a generator matrix for $\C$. Define $\U$ to be the $\F_q$-span of the columns of $G$. In this case, $\U$ is also called a \textbf{system associated with} $\C$. Vice versa, given an $[n,k]_{q^m/q}$ system $\U$, define $G$ to be the matrix whose columns are an $\F_q$-basis of $\U$ and let $\C$ be the code generated by $G$. $\C$ is also called a \textbf{code associated with} $\U$. For more details on this correspondence, see also \cite{alfarano2022linear}.

%%%%%%%%%%%%%%%%%%%%%%%%%%%%%%%%%%%%%%%%%%%%%%%%%%%%%%%%%%%%%%%%%%%%%%%%%%%%%%%%%%%%%%%%%%%
\subsection{Ferrers Diagram Codes}\label{subsec:ferrers_diagrams}
Etzion and Silberstein introduced Ferrers diagram rank-metric codes in \cite{etzion2009error} as a tool to construct subspace codes of large size. An $n\times m$ \textbf{Ferrers diagram} $\cF$ arises in combinatorics as a way to represent partitions of an integer through patterns of dots. In particular, $n$ right-justified rows of dots are displayed in non-increasing order (from top to bottom) and the first row has $m$ dots.
Given an $n\times m$ Ferrers diagram $\cF$, we say that a code $\cC$ is an $[\cF, M, \delta]_q$ \textbf{Ferrers diagram rank-metric code} if it is an $[n\times m, M, \delta]_q$ rank-metric code and all entries not indexed by $\cF$ are zeroes. 

For a given Ferrers diagram $\cF$, and for every $0 \leq i \leq \delta -1$ we denote by $\nu_i$ the number of dots in $\cF$ that are not contained in the first $i$ rows nor in the rightmost $\delta-1-i$ columns. Denote by $\nu_{\min}(\cF,\delta)=\min_{0\leq i\leq \delta-1}\{\nu_i\}$. Then the Singleton-like bound for 
the parameters of a $[\cF, M, \delta]_q$ Ferrers diagram rank-metric code reads as
\begin{equation} \label{eq:Ferrers_bound}
    M\leq q^{\nu_{\min}(\cF,\delta)}.
\end{equation}
When the bound is attained with equality, we say that $\cC$ is a \textbf{maximum Ferrers diagram} code (or \textbf{MFD} code for short). In \cite{etzion2009error}, the authors conjectured that it is possible to construct an MFD code for every given set of parameters $q,\delta,$ and $\cF$. This statement, known as the Etzion-Silberstein conjecture, has been proven for many classes of Ferrers diagrams and sets of parameters, but the conjecture is still widely open; see \cite{neri2023proof} for some recent advances towards the proof of the conjecture and references therein for the known constructions.

%%%%%%%%%%%%%%%%%%%%%%%%%%%%%%%%%%%%%%%%%%%%%%%%%%%%%%%%%%%%%%%%%%%%%%%%%%%%%%%%%%%%%%%%%%%
\subsection{Schubert Varieties} \label{subsec:schubert_varieties}
Schubert varieties and Schubert cells can be defined as subsets of a Grassmannian $\Gr_q(k,n)$, where we only consider the vector subspaces that intersect the elements of a given flag in subspaces of fixed dimension. More precisely, let $F$ be a \textbf{flag}, i.e.\ a sequence of nested linear subspaces $\{0\}\subset V_1 \subset V_2 \subset \dots \subset V_n=\bF_q^n$ such that $\dim(V_j)=j$, for every $j=1,2,\dots,n$. Let $d=(d_1,\dots,d_k)$ be an ordered index set satisfying $1\leq d_1 < \dots < d_k \leq n$. We say that a space $W\in\Gr_q(k,n)$ satisfies the \textbf{Schubert condition} $d$ if for every $i=1,\dots,k$, $V_{d_i}$ intersects $W$ in dimension at least $i$.

\begin{definition}
    Let $F=(V_1,V_2,\dots,V_n)$ be a flag and $d=(d_1,\dots,d_k)$ be a Schubert condition.
    The \textbf{Schubert variety} $\Omega_d$ is the set of all subspaces satisfying the Schubert condition $d$, i.e., 
        $$ \Omega_d := \{W\in\Gr_q(k,n) \mid \dim(W\cap V_{d_i})\geq i, \; \; \forall i=1,\dots,k \}.$$
    The \textbf{Schubert cell} $\Omega^0_d$ is the set of all subspaces such that for every $i$, $V_{d_i}$ is the first element of the flag that intersects $W$ in a space of dimension $i$, i.e.,
    \begin{align*}
        \Omega_d^0 :&= \{W\in\Gr_q(k,n) \mid \dim(W\cap V_{d_i})= i,\textnormal{ and } \dim(W\cap V_{d_i-1})=i-1, \; \; \forall i=1,\dots,k \}\\
        &= \{W\in \Omega_d \mid \dim(W\cap V_{d_i-1})=i-1, \; \; \forall i=1,\dots,k \}.
    \end{align*}   
\end{definition}

Notice that $\Gr_q(k,n)\cong \{A\in\Mat(k,n,\F_q) \mid \rk(A)=k\}/\GL(k,\F_q)$. In other words,  $\Gr_q(k,n)$ can be represented by the set of $k\times n$ matrices of rank $k$ in \textbf{reduced row echelon form} (RREF). Using this representation, we can naturally partition $\Gr_q(k,n)$  according to the pivot positions of the matrices in RREF.
Let us now consider the \textbf{standard flag}, i.e.\ the one where the subspaces are defined as $V_i=\mathrm{span}\{e_{n-i+1}, \dots, e_n\}$, with $e_i$ the vector of the standard basis having a one in position $i$ and zero in all other entries.
Then, each of the sets in the partition of the Grassmannian corresponds to a different Schubert condition with respect to the standard flag. In particular, the vector $p=(p_1,\dots,p_k)$ whose entries are the positions of the $k$ pivots in the RREF is related to the Schubert condition $d=(d_1,\dots,d_k)$ with respect to the standard flag by the equations
\begin{equation} \label{eq:rel_pi_di}
	p_i = n+1-d_{k+1-i} \quad \forall i=1,\dots,k.
\end{equation}

Notice that by letting $\GL(n,\bF_q)$ act on $\Mat(k,n,\bF_q)$ from the right, we can change the basis over $\bF_q^n$ and can therefore send any given flag to the standard one. Hence, we will only consider Schubert varieties with respect to the standard flag, as in this way we have a convenient representation of subspaces as matrices in reduced row echelon form and can use explicit coordinates.
For example, with these standard coordinates, the largest Schubert cell in the partition, also known as \textbf{thick open cell} and denoted by $\Gr_q^0(k,n)$, is the one where the pivots of the RREF are all in the first $k$ entries, and is associated to the Schubert condition $d=(n-k+1,\dots,n)$. 

%%%%%%%%%%%%%%%%%%%%%%%%%%%%%%%%%%%%%%%%%%%%%%%%%%%%%%%%%%%%%%%%%%%%%%%%%%%%%%%%%%%%%%%%%%%
\subsection{Subspace Codes}\label{subsec:subspace_codes}
In this subsection, we recall the definition of subspace codes and explain a connection between the Schubert cells of the Grassmannian and Ferrers diagram rank-metric codes. For a detailed treatment of the topic, we refer the interested reader 
to the surveys ~\cite{ho18,kurz2021construction}.

The Grassmannian $\Gr_q(k,n)$ can be interpreted as a metric space endowed with the \textbf{subspace distance}, defined for every $U,V\in\Gr_q(k,n)$ as
\begin{align*} \dS(U,V) &= \dim_{\F_q}(U+V) -\dim_{\F_q}(U\cap V)= 2(k-\dim_{\bF_q}(U\cap V)).
\end{align*}

A \textbf{constant dimension subspace code} is a family of subspaces $\cS\subseteq \Gr_q(k,n)$. The minimum subspace distance $\dd_\mathrm{S}(\cS)$ of $\mS$ is defined as 
$$\dd_{\mathrm{S}}(\cS)=\min\{\dd_{\mathrm{S}}(U,V) \mid U,V\in \cS, \: U\ne V\}.$$

It is clear that the largest possible minimum distance value that a subspace code $\cS$ can achieve is $2k$, which arises when $n\geq 2k$ and all the subspaces in $\cS$ intersect trivially. Subspace codes that have distance $2k$ and are maximal in the sense that every vector $v\in \Fq^n $ is inside some codeword $U\in\cS$, define geometrically a spread. For this reason, such codes have been called \textbf{spread codes}
in the coding literature and questions like efficient decoding of this class of codes have been studied; see~\cite{go12,ma08p}.

Partitioning the Grassmannian into Schubert cells based on the pivot positions in the RREF, one finds that the thick open cell can be described as follows:
$$\Gr_q^0(k,n)=\{U\in\Gr_q(k,n) \mid  U=U_A=\rowsp(\mathrm{Id}_{k} \mid A ), \textnormal{ for }A\in\Mat(k,n-k,\F_q)\}.$$
In this cell, we have that $\dd_{\mathrm{S}}(U_A,U_B) = 2\cdot\rk(A-B)$. In other words the metric spaces $(\Gr_q^0(k,n),\dd_{\mathrm{S}})$  and $(\Mat(k,n-k,\F_q),2\cdot\dd_{\rk})$ are isometric.

We can define a similar isometry for subspaces that belong to different Schubert cells.
Let $d$ be a Schubert condition with respect to the standard flag and $U\in\Omega_d^0\subseteq\Gr_q(k,n)$. Consider the representation of $U$ in RREF as a $k\times n$ matrix of rank $k$, with pivot position vector $p=(p_1,\ldots,p_k)$. By substituting all entries that are not to the left or above a leading coefficient with a dot, we obtain a matrix in the so-called \textbf{echelon-Ferrers form}. If we now consider only the pattern created by the dots, we obtain a Ferrers diagram $\cF$. Notice that there is a one-to-one correspondence between the Schubert cell $\Omega_d^0$ and the Ferrers diagram $\cF$. Given two matrices $A, B$ supported on $\cF$, they can be associated with the subspaces $U_A, U_B\in\Omega_d^0$, where the pivot positions are determined by the Ferrers diagram's shape and the other entries by the matrices $A,B$. We call this procedure \textbf{lifting}. It holds that $\dS(U_A,U_B)=2\cdot\dr(A,B)$. Hence, the space $(\Omega_d^0,\dS)$ of constant dimension codes in a Schubert cell is isometric to the space $(\Mat(\cF,\bF_q),2\cdot\dr)$ of Ferrers diagram rank-metric codes supported on the corresponding Ferrers diagram $\cF$.

In \cite{etzion2009error}, in order to design constant dimension codes of fixed distance and large size, the authors introduced the \textbf{multilevel construction}, where one takes the union of constant dimension codes constructed in different Schubert cells, each of them obtained by lifting a Ferrers diagram rank-metric code. Because of the properties of the Schubert cells and the isometry within each cell, the subspace code has double the distance of the Ferrers diagram rank-metric codes that are lifted to generate it.

%%%%%%%%%%%%%%%%%%%%%%%%%%%%%%%%%%%%%%%%%%%%%%%%%%%%%%%%%%%%%%%%%%%%%%%%%%%%%%%%%%%%%%%%%%%
%%%%%%%%%%%%%%%%%%%%%%%%%%%%%%%%%%%%%%%%%%%%%%%%%%%%%%%%%%%%%%%%%%%%%%%%%%%%%%%%%%%%%%%%%%%
\section{Subspace Codes Restricted to Schubert Varieties}\label{sec:prob_formulation}
%%%%%%%%%%%%%%%%%%%%%%%%%%%%%%%%%%%%%%%%%%%%%%%%%%%%%%%%%%%%%%%%%%%%%%%%%%%%%%%%%%%%%%%%%%%
We start this section by introducing the concept of subspace codes restricted to Schubert varieties.

\begin{definition} \label{def:Schubert_code}
    Let $\Omega_d$ be a Schubert subvariety of the Grassmannian $\Gr_q(k,n)$. A family $\cS\subseteq\Omega_d$ equipped with the subspace distance is called a \textbf{Schubert subspace code} with respect to $\Omega_d$.
\end{definition}

The natural coding theory question that arises is whether we can construct Schubert subspace codes of large size, where a certain minimum subspace distance is guaranteed. In the sequel, we examine some special cases for which we are able to provide codes with good parameters.

Throughout this paper, we assume $n=rk$, with $r,k$ positive integers, and work in the Grassmannian $\Gr_q(k,rk)$.
We first define the coordinate map from $\F_{q^k}^r$ to $\F_q^{rk}$, which is sometimes also called (vector) \textbf{field reduction} map; see \cite{lavrauw2015field}. This is simply the map $\varphi$ that associates to a vector $v\in\F_{q^k}^r$ the corresponding vector $\varphi(v)\in\F_q^{rk}$ by expanding the coordinates. More precisely, let $\{w_1,\ldots,w_k\}$ be a fixed basis of $\F_{q^k}$ over $\F_q$. Then
\begin{equation*} \begin{matrix}
        \varphi :&\F_{q^k}^r &\longrightarrow &\F_{q}^{rk}\\
        &(v_1, \ldots, v_r) &\longmapsto  &(c_{11},\ldots,c_{1k},c_{21},\ldots,c_{2k},c_{r1},\ldots,c_{rk}),	\end{matrix} \end{equation*}
where $v_i=\sum_{j=1}^{k}c_{ij}w_j$ for any  $i=1,\dots,r$.

In the same way, we can also define a projective version $\overline{\varphi}$ of the field reduction map, which identifies points of $\bP_{q^k}^{r-1}$ with $(k-1)$-dimensional subspaces of $\bP^{rk-1}_{q}$.
It is well-known that $\mathrm{Im}(\varphi)$ is a (vectorial) $k$-spread of $\F_q^{rk}$, which naturally gives rise to a projective $(k - 1)$-spread of $\bP^{rk -1}_{q}$. Such a spread is known as \textbf{Desarguesian spread}; see \cite{segre1964teoria}.

It is in general a challenging coding question to construct Schubert subspace codes having large cardinality for any subspace distance and any Schubert variety $\Omega_d$.
In this section, we provide some first results in the case where the minimum distance is $2k$ and for a particular class of Schubert varieties. Let $U$ be a subspace of $\bF_q^{rk}$ of dimension $u$ and denote by $\Omega_U$ the Schubert variety
$$\Omega_U := \{W\in\Gr_q(k,rk) | \dim(W\cap U)\geq 1\}.$$
 
\begin{definition} \label{def:intersecting_set}
    Let $k,r\geq 2$, $u\leq\frac{rk}{2}$ and let $U$ be a $u$-dimensional subspace of $\bF_q^{rk}$.  We say that a Schubert subspace code $\cS\subseteq \Omega_U$ is an \textbf{intersecting set with respect to $U$} if for every $S_i,S_j\in\mathcal{S}$ such that $ S_i\neq S_j$, it holds that $S_i\cap S_j =\{0\}$.
\end{definition}

We denote by $m_q(k,r,u)$ the largest size of an intersecting set with respect to $U$.

\begin{remark} \label{rk:m_well_defined}
Note that clearly $m_q(k,r,u)$ is well-defined, i.e.\ it does not depend on the choice of $U$, but only on its dimension $u$. Moreover, observe that $\cS$ gives rise to a partial spread~\cite{be75} in the Schubert variety with Schubert condition $d=(u,rk-k+2,\dots,rk-1,rk)$.
\end{remark}

Since the $k$-dimensional subspaces of $\F_q^{rk}$ are in correspondence with the $(k-1)$-dimensional projective subspaces of $\bP^{rk-1}_{q}$, we then have an equivalent ``projective" formulation for $m_q(k,r,u)$. Indeed, we can also consider  $U$ to be a $(u-1)$-dimensional subspace of $\bP^{rk-1}_{q}$. 
In this setting, we want to find a collection $\mathcal{S}$ of disjoint $(k-1)$-dimensional subspaces of $\bP^{rk-1}_{q}$ that intersect $U$ in at least one point. The two formulations are equivalent, hence with a little abuse of notation, we might switch from one to the other if the context is clear and talk about intersecting sets also when the elements of $\cS$ are projective subspaces.

We can easily compute an upper bound for $m_q(k,r,u)$. We include the proof for the sake of completeness.

\begin{proposition}\label{thm:upp_bound}
Let $k,r\geq 2$ and let $u<rk$. Then, $$m_q(k,r,u)\leq \binom{u}{1}_q.$$ 
\end{proposition}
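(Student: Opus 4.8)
The plan is to exhibit an explicit injection from the codewords of $\cS$ into the set of $1$-dimensional subspaces of $U$, and then simply count the latter. Since every $S\in\cS$ lies in $\Omega_U$, by definition $\dim_{\F_q}(S\cap U)\geq 1$, so the intersection $S\cap U$ contains a nonzero vector. The idea is that the pairwise-trivial-intersection condition forces these chosen vectors to span pairwise distinct lines, all of which live inside $U$.

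Concretely, I would first choose, for each codeword $S_i\in\cS$, a nonzero vector $v_i\in S_i\cap U$, which is possible precisely because $S_i\in\Omega_U$. This defines a map $S_i\mapsto \langle v_i\rangle_{\F_q}$ from $\cS$ into the set of lines through the origin contained in $U$. The second step is to verify that this map is injective. Suppose $\langle v_i\rangle=\langle v_j\rangle$ for some $i\neq j$; then $v_j=\lambda v_i$ for some scalar $\lambda\in\F_q^\times$, so $v_i=\lambda^{-1}v_j\in S_j$, while also $v_i\in S_i$. Hence $v_i\in S_i\cap S_j=\{0\}$, contradicting $v_i\neq 0$. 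Therefore distinct codewords give distinct lines.

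For the final step, since each $\langle v_i\rangle$ is a $1$-dimensional subspace of the $u$-dimensional space $U$, and the number of such subspaces is exactly the Gaussian binomial coefficient $\binom{u}{1}_q$, the injectivity just established yields
\[
    m_q(k,r,u)=|\cS|\leq \binom{u}{1}_q,
\]
as claimed. (Equivalently, in the projective reformulation one picks a point $P_i\in S_i\cap U$ of $\bP^{rk-1}_q$; disjointness of the projective subspaces $S_i$ forces the $P_i$ to be distinct points of $U$, of which there are $\binom{u}{1}_q$.)

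There is essentially no hard step here: the argument is a pigeonhole/injection bound, and the only point requiring a line of care is checking distinctness of the lines $\langle v_i\rangle$, which follows immediately from the defining condition $S_i\cap S_j=\{0\}$. Note that the hypothesis $u\leq \tfrac{rk}{2}$ from \cref{def:intersecting_set} is not needed for this bound, consistent with the proposition being stated under the weaker assumption $u<rk$.
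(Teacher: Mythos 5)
Your proof is correct and is essentially the same argument as the paper's: both are the pigeonhole bound obtained by assigning to each codeword a one-dimensional subspace of $U$ that it contains, and noting that pairwise trivial intersection forces these lines to be distinct, so $|\cS|\leq\binom{u}{1}_q$. Your write-up merely makes the choice of line and the injectivity check explicit, which the paper leaves implicit.
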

\begin{proof}
    The bound follows from the fact that a subspace $U\leq \F_q^{rk}$ contains $\binom{u}{1}_q=\frac{q^u-1}{q-1}$ one-dimensional subspaces. If $m_q(k,r,u)>\binom{u}{1}_q$, then at least two subspaces in $\mS$ intersect $U$ in the same one-dimensional subspace. Hence, we get a contradiction.
\end{proof}

In the following, we are going to show that when $u\leq \frac{rk}{2}$, there exist some cases for which the upper bound from \cref{thm:upp_bound} is tight. The construction we provide makes use of geometric objects known as \textit{linear sets}.
Linear sets in finite geometry can be viewed as a generalization of subgeometries. Their name was first proposed by Lunardon in \cite{lunardon1999normal}, where linear sets are used for special constructions of blocking sets. The interested reader is referred to \cite{polverino2010linear} for an in-depth treatment. 

Special linear sets, which are of particular interest for this paper, are the \emph{scattered} ones, introduced by Blokhuis and Lavrauw in \cite{blokhuis2000scattered}. More recently, Sheekey and Van de Voorde observed a connection between scattered linear sets and MRD codes in \cite{sheekey2016new,Sheekey2020}; see also \cite{polverino2020connections}.

\begin{definition}
Let $\U$ be a $[u,r]_{q^k/q}$-system. The $\F_q$-\textbf{linear set} in $\bP^{r-1}_{q^k}$ of rank $u$ associated with $\U$ is the set 
$$L_\U:=\{\langle v \rangle_{\F_{q^k}} \mid v\in \U\setminus\{0\}\} \subseteq \bP^{r-1}_{q^k},$$
where $\langle v \rangle_{\F_{q^k}}$ denotes the projective point corresponding to the vector $v$.
\end{definition}

Let $\Lambda=\bP(\cW)$ be the projective subspace corresponding to the $\F_{q^k}$-subspace~$\cW$ of~$\F_{q^k}^r$. We define the \textbf{weight} of $\Lambda$ in $L_\U$ as the integer
$$ \wt_{\U}(\Lambda):=\dim_{\Fq}(\cU \cap \cW).$$
For $P\in \bP^{r-1}_{q^k}$, we have that $P \in L_\U$ if and only if $\wt_{\U}(P)\geq 1$. 

Let $\U$ be a $[u,r]_{q^k/q}$-system. Then, the cardinality of the associated linear set $L_\U$ satisfies 
\begin{equation}\label{eq:linear_set} |L_\U|\leq \frac{q^u-1}{q-1}. \end{equation}
A linear set $L_\U$ whose cardinality meets \eqref{eq:linear_set} with equality is said to be \textbf{scattered}. Equivalently, a linear set $L_{\U}$ is scattered if and only if $\wt_{\U}(P)=1$ for each $P\in L_{\U}$. 

In \cite{blokhuis2000scattered}, Blokhuis and Lavrauw provided an upper bound on the parameters of scattered linear sets. If $\U$ is a $[u,r]_{q^k/q}$-system such that the linear set $L_\U$ is scattered, then 
\begin{equation}\label{eq:upp_bound_scattered}
    u\leq\frac{rk}{2}.
\end{equation}

During the past years, a lot of progress has been made concerning the study of \textbf{maximum scattered linear sets}, which are linear sets whose parameters meet the bound \eqref{eq:upp_bound_scattered} with equality. When $r$ is even, a construction of maximum scattered linear sets has been provided by Blokhuis and Lavrauw; see \cite{blokhuis2000scattered}. When $r$ is odd and $k$ is even a construction of maximum scattered linear sets has been provided for infinitely many parameters; see \cite[Theorem 1.2]{bartoli2018maximum}. Finally, in \cite{csajbok2017maximum}, it is shown that, whenever $rk$ is even, there exists an $\left[\frac{rk}{2},r\right]_{q^k/q}$-system $\U$, such that $L_\U$ is scattered. When $rk$ is odd, one of the few existence results on the
maximum rank of a scattered linear set is due to Blokhuis and Lavrauw; see \cite[Theorem 4.4]{blokhuis2000scattered}.
They show that there  exists an $[ab,r]_{q^k/q}$ system such that $L_{\U}$ is scattered, whenever $a$ divides $r$, $\gcd(a,k)=1$ and
$$ ab<\begin{cases}\frac{rk-k+3}{2} & \mbox{ if } q=2 \mbox{ and } a=1,\\
\frac{rk-k+a+3}{2} & \mbox{ otherwise.}
\end{cases}$$

Let $\U$ be an $[u,r]_{q^k/q}$-system such that $L_\U$ is scattered. In \cite[Lemma 6.6]{alfarano2022linear}, it is stated that if $u>r$, then there exists an $[u-1,r]_{q^k/q}$-system $\mathcal{V}$, such that $L_{\mathcal{V}}$ is scattered. 

Using these results, we are now ready to present our construction of intersecting sets with respect to a subspace $U$ of $\bF_q^{rk}$.  

\begin{proposition}\label{prop:constr_field_red}
    Let $\U$ be a $[u,r]_{q^k/q}$-system and let $L_\U$ be the linear set associated with it. Let~$\mathcal{P}$ be the set of points of weight $1$ in $L_\U$. Then $$\{\overline{\varphi}(P) \mid P\in\mathcal{P}\}\subseteq \bP^{rk-1}_{q}$$ is an intersecting set with respect to $\bP(\varphi(\U))$ (in the projective sense).
\end{proposition}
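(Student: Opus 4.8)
The plan is to verify directly the two defining properties of an intersecting set in its projective formulation, exploiting the fact that the field reduction map $\varphi$ is an $\F_q$-linear \emph{bijection} from $\F_{q^k}^r$ onto $\F_q^{rk}$. The single observation that drives everything is that, being a bijective $\F_q$-linear map, $\varphi$ preserves $\F_q$-dimensions and commutes with intersections, i.e.\ $\varphi(A\cap B)=\varphi(A)\cap\varphi(B)$ for all $\F_q$-subspaces $A,B$ (the nontrivial inclusion here uses injectivity). Under $\overline{\varphi}$, a point $P=\langle v\rangle_{\F_{q^k}}$ of $\bP^{r-1}_{q^k}$ corresponds to the image of the $k$-dimensional $\F_q$-subspace $\langle v\rangle_{\F_{q^k}}$, so $\overline{\varphi}(P)=\bP(\varphi(\langle v\rangle_{\F_{q^k}}))$ is a $(k-1)$-dimensional projective subspace, as required for an element of $\Gr_q(k,rk)$. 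Here I write $\langle v\rangle_{\F_{q^k}}$ for the underlying $1$-dimensional $\F_{q^k}$-subspace, viewed as a $k$-dimensional $\F_q$-subspace.

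First I would show that each $\overline{\varphi}(P)$, for $P\in\mathcal{P}$, meets $\bP(\varphi(\U))$ in at least one point, placing it in the Schubert variety $\Omega_{\bP(\varphi(\U))}$. Since $P\in\mathcal{P}$ has weight $\wt_\U(P)=1$, the definition of weight says that the $\F_q$-subspace $\U\cap\langle v\rangle_{\F_{q^k}}$ is $1$-dimensional. Applying $\varphi$ and using that it commutes with intersections gives $\varphi(\U)\cap\varphi(\langle v\rangle_{\F_{q^k}})=\varphi(\U\cap\langle v\rangle_{\F_{q^k}})$, again of $\F_q$-dimension $1$. Projectivizing, $\overline{\varphi}(P)\cap\bP(\varphi(\U))$ is a single point, so in particular $\overline{\varphi}(P)$ meets $\bP(\varphi(\U))$ nontrivially.

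Next I would establish pairwise disjointness. For distinct $P=\langle v\rangle_{\F_{q^k}}$ and $P'=\langle v'\rangle_{\F_{q^k}}$ in $\mathcal{P}$, the two $1$-dimensional $\F_{q^k}$-subspaces $\langle v\rangle_{\F_{q^k}}$ and $\langle v'\rangle_{\F_{q^k}}$ intersect trivially; by injectivity of $\varphi$ so do their images, whence $\overline{\varphi}(P)\cap\overline{\varphi}(P')=\varnothing$ in $\bP^{rk-1}_q$. This is exactly the Desarguesian spread property recorded earlier: distinct points of $\bP^{r-1}_{q^k}$ are sent to distinct, hence disjoint, spread elements. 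Combining the two properties shows that $\{\overline{\varphi}(P)\mid P\in\mathcal{P}\}$ is a family of pairwise disjoint $(k-1)$-subspaces each meeting $\bP(\varphi(\U))$, i.e.\ an intersecting set with respect to $\bP(\varphi(\U))$.

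I do not expect a genuine obstacle here: the content is almost entirely the dictionary between the vectorial and projective pictures together with careful bookkeeping of $\F_q$- versus $\F_{q^k}$-dimensions. The one point to be careful with is the identity $\varphi(A)\cap\varphi(B)=\varphi(A\cap B)$, where the inclusion $\subseteq$ relies on $\varphi$ being injective; once the weight-$1$ hypothesis is translated into the statement that $\U\cap\langle v\rangle_{\F_{q^k}}$ is a line, everything else is formal. I would also note in passing that the weight-$1$ restriction is not strictly needed for the intersecting-set property itself (any $P\in L_\U$ would serve), but it is the natural choice that will later make the size of the construction match the upper bound of \cref{thm:upp_bound} when $L_\U$ is scattered.
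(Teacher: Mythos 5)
Your proof is correct and takes essentially the same route as the paper's: both verify that each $\overline{\varphi}(P)$ is a $(k-1)$-dimensional projective subspace meeting $\bP(\varphi(\U))$ because $\wt_\U(P)=1$ forces $\dim_{\F_q}(\U\cap\langle v\rangle_{\F_{q^k}})=1$, and both conclude disjointness from the fact that distinct points of $\bP^{r-1}_{q^k}$ map to disjoint Desarguesian spread elements. You merely make explicit (via injectivity of $\varphi$ and its compatibility with intersections) two steps the paper dismisses as clear, and your closing observation that weight $1$ is only needed for the size count, not for the intersecting-set property, is also accurate.
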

\begin{proof}
    First of all, we notice that if $P=\bP(\langle p\rangle_{\bF_{q^k}})\in\mathcal{P}$, then $\overline{\varphi}(P)$ is a $(k-1)$-dimensional subspace of $\bP^{rk-1}_{q}$. Moreover, $1=\dim_{\F_q}(\U\cap \langle p\rangle_{\bF_{q^k}})$, which implies that the intersection $\bP(\varphi(\U))\cap \overline\varphi(P)$ is non empty. Finally, for distinct $P\in\mathcal{P}$ the spaces $\overline\varphi(P)$ are clearly disjoint. Then $\{\overline{\varphi}(P) \mid P\in\mathcal{P}\}\subseteq \bP^{rk-1}_{q}$ is a collection of disjoint subspaces intersecting $\bP(\varphi(\cU))$.
\end{proof}

Clearly, the collection of spaces $\{\overline\varphi(P)\mid P\in\mathcal{P}\}$, where $\mathcal{P}$ is the set of points of weight $1$ in a linear set, is a subset of the Desarguesian spread. The following corollary is then immediate.

\begin{corollary}
Let $\U$ be an $\left[\frac{rk}{2},r\right]_{q^k/q}$-system such that the linear set $L_\U$ is maximum scattered.
Then $m_q\left(k,r,\frac{rk}{2}\right)=\frac{q^{\frac{rk}{2}}-1}{q-1}$.
\end{corollary}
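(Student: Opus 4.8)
The plan is to combine the upper bound from \cref{thm:upp_bound} with the construction from \cref{prop:constr_field_red}, showing that in the maximum scattered case the two values coincide. First I would set $u=\frac{rk}{2}$ and observe that, by hypothesis, $\U$ is an $\left[\frac{rk}{2},r\right]_{q^k/q}$-system whose associated linear set $L_\U$ is maximum scattered. By the definition of scattered, every point of $L_\U$ has weight $1$, so the set $\mathcal{P}$ of weight-$1$ points appearing in \cref{prop:constr_field_red} is in fact all of $L_\U$; moreover its cardinality attains the scattered bound, namely $|L_\U|=|\mathcal{P}|=\frac{q^{\frac{rk}{2}}-1}{q-1}$.

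Next I would invoke \cref{prop:constr_field_red} directly with this $\U$: it produces the collection $\{\overline{\varphi}(P)\mid P\in\mathcal{P}\}\subseteq\bP^{rk-1}_{q}$, which is an intersecting set with respect to $\bP(\varphi(\U))$ in the projective sense. Since $\overline{\varphi}$ is injective (it is the projective field reduction map sending distinct points to distinct $(k-1)$-dimensional subspaces), distinct points of $\mathcal{P}$ yield distinct codewords, so the size of this intersecting set equals $|\mathcal{P}|=\frac{q^{\frac{rk}{2}}-1}{q-1}$. Because $\varphi(\U)$ is a subspace of $\bF_q^{rk}$ of $\F_q$-dimension $u=\frac{rk}{2}$ (the field reduction map preserves $\F_q$-dimension), this is a valid intersecting set with respect to a $\frac{rk}{2}$-dimensional subspace, and by \cref{rk:m_well_defined} the quantity $m_q\!\left(k,r,\frac{rk}{2}\right)$ does not depend on the particular choice of such a subspace. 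Hence $m_q\!\left(k,r,\frac{rk}{2}\right)\geq\frac{q^{\frac{rk}{2}}-1}{q-1}$.

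For the reverse inequality I would apply \cref{thm:upp_bound} with $u=\frac{rk}{2}$. Since $k,r\geq 2$ we have $\frac{rk}{2}<rk$, so the hypothesis $u<rk$ of the proposition is satisfied, and it yields
\[
m_q\!\left(k,r,\tfrac{rk}{2}\right)\leq \binom{\tfrac{rk}{2}}{1}_q=\frac{q^{\frac{rk}{2}}-1}{q-1}.
\]
Combining the two inequalities gives the claimed equality. I do not expect any serious obstacle here, since the corollary is essentially a matching of an explicit construction against an already-established counting bound; the only points requiring a word of care are confirming that scattered forces $\mathcal{P}=L_\U$ so that the construction achieves the full count $\frac{q^{\frac{rk}{2}}-1}{q-1}$, and checking that $u=\frac{rk}{2}$ indeed lies in the admissible range $u\leq\frac{rk}{2}$ of \cref{def:intersecting_set} — which it does, with equality, exactly the regime where maximum scattered linear sets exist whenever $rk$ is even by \cite{csajbok2017maximum}.
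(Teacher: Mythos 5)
Your proposal is correct and matches the paper's intended argument exactly: the paper states this corollary as ``immediate'' from \cref{prop:constr_field_red}, precisely because scatteredness forces every point of $L_\U$ to have weight $1$ (so $\mathcal{P}=L_\U$ attains the cardinality $\frac{q^{rk/2}-1}{q-1}$), and this lower bound meets the upper bound of \cref{thm:upp_bound}. Your write-up simply makes explicit the injectivity of $\overline{\varphi}$ and the independence of $m_q$ from the choice of $U$ (\cref{rk:m_well_defined}), both of which the paper treats as understood.
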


Hence, given an $q$-system $\U$ such that $L_\U$ is a maximum scattered linear set, \cref{prop:constr_field_red} provides an intersecting set of maximum size with respect to $\varphi(\cU)$. 

%%%%%%%%%%%%%%%%%%%%%%%%%%%%%%%%%%%%%%%%%%%%%%%%%%%%%%%%%%%%%%%%%%%%%%%%%%%%%%%%%%%%%%%%%%%
\subsection{Special Case}

In this subsection, we study the special case where $U\subseteq\bF_q^{rk}$ is a subspace of dimension $u=k$ in \cref{def:intersecting_set}. In this case, we provide a more algebraic construction of a maximum size Schubert subspace code $\cS$ with respect to $\Omega_U$ with minimum subspace distance $2k$. We then compare it to the previous construction for general dimension $u$.

Consider the $[k,r]_{q^k/q}$-system
\begin{equation} \U:=\{(s,s^q, s^{q^2},\dots,s^{q^{r-1}})\mid s\in\bF_{q^k}\}. \label{eq:U}\end{equation}

\begin{theorem}
    Consider the field extension $\bF_{q^k}/\bF_q$ and let $\U$ be as defined in \eqref{eq:U}.
    Let $a\in\F_{q^k}$ be such that its field norm is $\N_{\F_{q^k}/\F_q}(a)=1$ and let
    \begin{equation}
    \sigma_a:=\{(s,as, s^{q^2},\ldots, s^{q^{r-2}}, s^{q^{r-1}})\mid s\in\F_{q^k}\}. \label{eq:l_a}\end{equation}
 Then $\U\cap\sigma_a$ is an $\bF_q$-linear space in $\bF_{q^k}^{r}$ of dimension 1.
\end{theorem}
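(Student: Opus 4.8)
The plan is to reduce the intersection $\U\cap\sigma_a$ to a single scalar equation in $s\in\Fk$, and then to count the solutions of that equation using the norm hypothesis.

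First I would write the two families of vectors explicitly: every element of $\U$ has the form $v_s=(s,s^q,s^{q^2},\dots,s^{q^{r-1}})$ and every element of $\sigma_a$ has the form $w_t=(t,at,t^{q^2},\dots,t^{q^{r-1}})$. Imposing $v_s=w_t$ and comparing first coordinates forces $t=s$; with $t=s$ the coordinates $3,\dots,r$ of $v_s$ and $w_t$ both equal $s^{q^i}$ for $i=2,\dots,r-1$ and so match automatically. Thus the only genuine constraint comes from the second coordinate, namely $s^q=as$. Consequently
\[
\U\cap\sigma_a=\{(s,s^q,\dots,s^{q^{r-1}})\mid s\in\Fk,\ s^q=as\}.
\]
Since the map $s\mapsto(s,s^q,\dots,s^{q^{r-1}})$ is an injective $\Fq$-linear map (the first coordinate recovers $s$), the $\Fq$-dimension of $\U\cap\sigma_a$ equals the $\Fq$-dimension of $K:=\{s\in\Fk\mid s^q=as\}$.

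Next I would note that $L(s)=s^q-as$ is a $q$-polynomial, hence an $\Fq$-linear endomorphism of $\Fk$, so $K=\ker L$ is an $\Fq$-subspace and $|K|=q^{\dim_{\Fq}K}$. Because $L$ has degree $q$ as a polynomial, it has at most $q$ roots, which already gives $\dim_{\Fq}K\le 1$. To get the matching lower bound I must exhibit a nonzero element of $K$, and this is where the hypothesis $\N_{\Fk/\Fq}(a)=1$ enters: a nonzero $s$ lies in $K$ precisely when $s^{q-1}=a$. The power map $s\mapsto s^{q-1}$ on $\Fk^{\times}$ has kernel $\Fq^{\times}$, so its image is the unique subgroup of $\Fk^{\times}$ of order $\tfrac{q^k-1}{q-1}$, i.e.\ the set of $a$ satisfying $a^{(q^k-1)/(q-1)}=1$. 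Since $\tfrac{q^k-1}{q-1}=1+q+\dots+q^{k-1}$, this exponent is exactly the one defining the norm, and $a^{(q^k-1)/(q-1)}=\N_{\Fk/\Fq}(a)=1$ by assumption. Hence $a$ lies in the image, the equation $s^{q-1}=a$ has a nonzero solution, and $\dim_{\Fq}K\ge 1$. Combining the two bounds yields $\dim_{\Fq}(\U\cap\sigma_a)=\dim_{\Fq}K=1$.

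The main obstacle is the recognition that the condition $\N_{\Fk/\Fq}(a)=1$ is precisely the solvability criterion $a^{(q^k-1)/(q-1)}=1$ for $s^{q-1}=a$; once this identification is made, the existence of a nonzero solution is immediate, and the degree of the $q$-polynomial $s^q-as$ forces the dimension to be exactly one. The remaining verifications (that coordinates $3,\dots,r$ impose no condition, and that the parametrization is $\Fq$-linear and injective) are routine.
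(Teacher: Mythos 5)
Your proof is correct and follows the same skeleton as the paper's: reduce the intersection $\U\cap\sigma_a$ to the kernel of the linearized polynomial $x^q-ax$ acting on $\Fk$, observe that this kernel is an $\Fq$-subspace of dimension at most $1$ because the polynomial has degree $q$, and then use the norm hypothesis to exhibit a nonzero root. The one genuine difference is in how that nonzero root is produced. The paper invokes Hilbert's Theorem 90: $\N_{\Fk/\Fq}(a)=1$ if and only if $a=\beta^{1-q}$ for some $\beta\in\Fk^*$, whence $\beta^{-1}$ is the desired root. You instead prove the needed equivalence from scratch: the map $s\mapsto s^{q-1}$ on the cyclic group $\Fk^\times$ has kernel $\Fq^\times$, so its image is the unique subgroup of order $(q^k-1)/(q-1)$, which is precisely the norm-one subgroup since $\N_{\Fk/\Fq}(a)=a^{(q^k-1)/(q-1)}$. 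This amounts to a self-contained proof of Hilbert 90 for finite fields; it buys independence from the citation at the cost of a few extra lines, while the paper's version is shorter and ties the statement to a standard named result. A minor additional point in your favor: you justify the reduction step explicitly (checking that coordinates $3,\dots,r$ impose no constraint and that $s\mapsto(s,s^q,\dots,s^{q^{r-1}})$ is an injective $\Fq$-linear parametrization, so the dimension of the intersection equals that of the kernel), which the paper leaves implicit.
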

\begin{proof} Observe that
    \begin{equation*}
		\U\cap\sigma_a =
		 \{(s,s^q, s^{q^2},\dots,s^{q^{r-1}})\mid s\in\bF_{q^k}\text{ s.t. }s^q=as\}.		\end{equation*}
	To determine this intersection we need to study the roots of the polynomial
    $$L_a(x)=x^q-ax=x(x^{q-1}-a).$$
	Notice that $\forall a\in\bF_{q^k}$, $L_a(x)$ is a nonzero linearized polynomial of degree $q$ over $\bF_{q^k}$. Hence, its roots form an $\bF_q$-linear subspace of $\bF_{q^k}$ of either dimension $0$, in which case the only root is $x=0$ counted with multiplicity $q$, or of dimension $1$, in which case $L_a(x)$ has $q$ distinct roots of multiplicity $1$. By Hilbert's Theorem 90 (see \cite[Ex. 2.33]{lidl1994introduction}), we have that
	\begin{equation*} \label{eq:hilbert90}
		\N_{\bF_{q^k}/\bF_{q}}(a)=1 \iff \exists \beta\in\bF_{q^k}^*\text{ s.t. }a=\frac{\beta}{\beta^q}=\beta^{1-q}.
	\end{equation*}
	If there exists such $\beta$, its inverse is also a non-zero element of $\bF_{q^k}$ and it is a root of the linearized polynomial $L_a(x)$. 
	Hence, $\N_{\bF_{q^k}/\bF_{q}}(a)=1$ implies that $L_a(x)$ has a nonzero root and, since it has degree $q$ and its roots form an $\bF_q$-linear subspace, there are $q$ distinct roots, each with multiplicity~$1$.
\end{proof}

Now, observe that in $\F_{q^k}$ there are $|\ker(\N_{\bF_{q^k}/\bF_{q}})| = \dfrac{q^k-1}{q-1}$ elements with norm equal to $1$, so there are $\dfrac{q^k-1}{q-1}$ many spaces $\sigma_a$ defined as in Equation \eqref{eq:l_a}. Hence we have the following.

\begin{corollary}\label{cor:construction_norm}
    If $\varphi$ is the field reduction map, the collection
    \begin{equation}\{\varphi(\sigma_a) \mid a\in\F_{q^k}, \; \N_{\F_{q^k}/\Fq}(a)=1\}\label{eq:sigma_a}
        \end{equation}
    forms an intersecting set with respect to the subspace $\varphi(\U)$. 
\end{corollary}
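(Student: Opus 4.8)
The plan is to check directly the two conditions in \cref{def:intersecting_set} defining an intersecting set with respect to $\varphi(\U)$: namely that each member $\varphi(\sigma_a)$ lies in the Schubert variety $\Omega_{\varphi(\U)}$ (i.e.\ meets $\varphi(\U)$ nontrivially), and that any two distinct members meet only in $\{0\}$. The unifying remark is that the field reduction map $\varphi$ is an $\F_q$-linear isomorphism from $\F_{q^k}^r$ onto $\F_q^{rk}$; being injective and $\F_q$-linear, it preserves $\F_q$-dimension and satisfies $\varphi(A\cap B)=\varphi(A)\cap\varphi(B)$ for all $\F_q$-subspaces $A,B$. This lets me transport every computation from $\F_{q^k}^r$ to $\F_q^{rk}$.

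First I would record that each $\sigma_a$ is a $k$-dimensional $\F_q$-subspace of $\F_{q^k}^r$: the parametrizing map $s\mapsto (s,as,s^{q^2},\dots,s^{q^{r-1}})$ is $\F_q$-linear, since each coordinate is either multiplication by a fixed scalar or a power of the Frobenius, all of which are additive and commute with scalars from $\F_q$; it is injective because of its first coordinate; and $\dim_{\F_q}\F_{q^k}=k$. Hence $\varphi(\sigma_a)$ is a genuine element of $\Gr_q(k,rk)$, and $\varphi(\U)$ has dimension $u=k\le \tfrac{rk}{2}$ (using $r\ge 2$), so the hypotheses of \cref{def:intersecting_set} are met.

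For membership in $\Omega_{\varphi(\U)}$, I would invoke the preceding theorem: since $\N_{\F_{q^k}/\F_q}(a)=1$, it gives $\dim_{\F_q}(\U\cap\sigma_a)=1$. Applying the injective linear map $\varphi$ yields $\varphi(\U)\cap\varphi(\sigma_a)=\varphi(\U\cap\sigma_a)$ of dimension $1$, so $\dim(\varphi(\sigma_a)\cap\varphi(\U))\ge 1$ and $\varphi(\sigma_a)\in\Omega_{\varphi(\U)}$. For pairwise disjointness, take $a\ne a'$ of norm $1$ and compare a typical element $(s,as,s^{q^2},\dots)$ of $\sigma_a$ with a typical element $(t,a't,t^{q^2},\dots)$ of $\sigma_{a'}$; equating first coordinates forces $s=t$, and then equating second coordinates gives $(a-a')s=0$, hence $s=0$. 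Thus $\sigma_a\cap\sigma_{a'}=\{0\}$, and applying $\varphi$ gives $\varphi(\sigma_a)\cap\varphi(\sigma_{a'})=\{0\}$. Together these two facts establish that the collection is an intersecting set with respect to $\varphi(\U)$.

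I do not anticipate a genuine obstacle: all the delicate content, in particular the norm condition treated via Hilbert's Theorem 90, has already been absorbed into the preceding theorem, and what remains is elementary linear algebra. The only points deserving care are that $\varphi$ commutes with intersections (which is exactly where injectivity is used) and that distinct values of $a$ give distinct spaces $\sigma_a$ (again by comparing the first two coordinates), the latter ensuring that the collection indeed has the full size $\tfrac{q^k-1}{q-1}$ and thereby meets the upper bound of \cref{thm:upp_bound} for $u=k$.
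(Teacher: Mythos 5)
Your proposal is correct and follows essentially the same route as the paper, which presents the corollary as an immediate consequence of the preceding theorem (each $\sigma_a$ meets $\U$ in dimension $1$) together with the count of norm-one elements, leaving the remaining details implicit. You simply make those implicit details explicit — the $\F_q$-linearity and dimension of each $\sigma_a$, the pairwise disjointness via comparing the first two coordinates, and the fact that the injective $\F_q$-linear map $\varphi$ preserves dimensions and intersections — all of which are exactly the verifications the paper's ``Hence we have the following'' glosses over.
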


Up to isomorphism we can send $\varphi(\cU)$ to any subspace $U$ of $\F_q^{rk}$ and the collection \eqref{eq:sigma_a} to an intersecting set of maximum size $m_q(k,r,k)=\binom{k}{1}_q$ with respect to such $U$.

\begin{remark}
The $q$-system $\U$ defined in Equation \eqref{eq:U} is the $q$-system associated with a $[k,r]_{q^k/q}$ Gabidulin code. Such codes are MRD and constitute a well-studied family of rank-metric codes. Moreover, in \cite{Sheekey2020} it is shown that $L_\U$ is scattered. Hence, by considering the field reduction of all the points of $L_\U$ as in \cref{prop:constr_field_red}, we obtain another collection of disjoint subspaces intersecting  $\varphi(\U)$. The natural question that arises is whether the two constructions we provided are equivalent or, in other words, whether the spaces $\varphi(\sigma_a)$ in Equation \eqref{eq:l_a} can be regarded as field reduction of points of $L_\U$.
\end{remark}

First of all, we give the following notion of equivalence.

\begin{definition}\label{def:equivalence}
   Let $U$ and $U'$ be two $k$-dimensional subspaces $\F_{q}^{rk}$. Let $\cS_U$ and $\cS_{U'}$ be two intersecting sets with respect to $U$ and $U'$, respectively. Then $\mS_{U}, \mS_{U'}$ are \textbf{equivalent} if there exists an $\F_q$-linear map $\psi\in\GL(rk,\Fq)$ such that $\psi(U)=U'$ and $\psi(\mS_U)=\mS_{U'}$.
\end{definition}

Let $\U$ be the $[k,r]_{q^k/q}$-system $\{(s,s^q,\ldots,s^{q^{r-1}}) \mid s\in\F_{q^k}\}$, given in Equation~\eqref{eq:U} and $\cP$ be the set of points of weight 1 in $L_{\cU}$. Let $\mS_1:=\{\varphi(\sigma_a) \mid a\in\F_{q^k}, \; \N_{\F_{q^k}/\F_q}(a)=1\}$ and $\mS_2:=\{\overline\varphi(P) \mid P \in \cP\}$.  
When $r=2$, $\mS_1$ and $\mS_2$ are essentially the same construction. 
For a general $r$ though, we do not know whether $\mS_1$ and $\mS_2$ are equivalent or not in the sense of \cref{def:equivalence}.
However, we can find a map that sends $\mS_1$ in the set given by the field reduction of the points of a different linear set, as the following proposition illustrates.

\begin{theorem}\label{prop:transformation}
Let $\U$ be the $[k,r]_{q^k/q}$-system $\{(s,s^q,\ldots,s^{q^{r-1}}) \mid s\in\F_{q^k}\}$, given in Equation~\eqref{eq:U}. Let $\mS_1:=\{\varphi(\sigma_a) \mid a\in\F_{q^k}, \; \N_{\F_{q^k}/\F_q}(a)=1\}$. 
Let $\U'=\{(x,x^q,x,\ldots,x)\mid x\in\F_{q^k}\}$ and $\cP'$ be the set of points of weight 1 in $L_{\cU'}$. Then $\mS_1$ is equivalent to $\mS_2:=\{\overline\varphi(P) \mid P\in \cP'\}$.
\end{theorem}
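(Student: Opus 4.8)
The plan is to construct an explicit $\F_q$-linear isomorphism $\psi \in \GL(rk, \F_q)$ and verify it satisfies the two requirements of \cref{def:equivalence}, namely $\psi(\varphi(\U)) = \varphi(\U')$ and $\psi(\mS_1) = \mS_2$. The key observation is that both $\mS_1$ and $\mS_2$ live inside the Desarguesian spread $\mathrm{Im}(\varphi)$, and each spread element $\varphi(\sigma_a)$ or $\overline\varphi(P)$ is the image under $\varphi$ of a one-dimensional $\F_{q^k}$-subspace of $\F_{q^k}^r$. So rather than working directly with $rk \times rk$ matrices over $\F_q$, I would first look for a \emph{semilinear} correspondence at the level of $\F_{q^k}^r$ that carries the relevant $\F_{q^k}$-lines to one another, and only afterward translate it through $\varphi$ into an honest $\F_q$-linear map on $\F_q^{rk}$.

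First I would identify the points explicitly. An element of $\mS_1$ has the form $\varphi(\sigma_a)$, where $\sigma_a$ collects the vectors $(s, as, s^{q^2}, \ldots, s^{q^{r-1}})$ with $s^q = as$; since $s \ne 0$ forces $a = s^{q-1}$, each such $\sigma_a$ is the $\F_q$-span of a single $\F_{q^k}$-line, namely $\langle (1, s^{q-1}, s^{q^2 - 1}, \ldots, s^{q^{r-1}-1}) \rangle$ scaled by $s$. Meanwhile a weight-one point $P \in \cP'$ of $L_{\U'}$ corresponds to an $\F_{q^k}$-line $\langle (x, x^q, x, \ldots, x) \rangle = \langle (1, x^{q-1}, 1, \ldots, 1) \rangle$. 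The plan is to define a map that acts coordinatewise by the Frobenius powers that convert one family of exponents into the other: on $\F_{q^k}^r$ consider the $\F_q$-semilinear map $\Phi$ sending $(v_1, \ldots, v_r) \mapsto (v_1, v_2^{?}, v_3^{?}, \ldots)$ with the coordinate-wise Frobenius twists chosen so that the exponent pattern $(0, q-1, q^2-1, \ldots, q^{r-1}-1)$ of $\sigma_a$ maps to the pattern $(0, q-1, 0, \ldots, 0)$ of $\U'$. I would then check this map sends the distinguished system $\U$ to $\U'$ as well, which pins down the normalization.

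Once the $\F_{q^k}$-semilinear picture is correct, I would translate it to an $\F_q$-linear map $\psi$ on $\F_q^{rk}$ via $\varphi$. The point is that a coordinatewise Frobenius $v \mapsto v^{q^j}$ on $\F_{q^k}$ is $\F_q$-linear as a map $\F_{q^k} \to \F_{q^k}$ (it fixes $\F_q$ pointwise), so after expanding through $\varphi$ with the fixed basis $\{w_1, \ldots, w_k\}$, the block-diagonal assembly of these Frobenius maps becomes a genuine element of $\GL(rk, \F_q)$. I would then verify $\psi(\varphi(\U)) = \varphi(\U')$ and that $\psi$ carries each $\varphi(\sigma_a)$ to some $\overline\varphi(P)$ with $P \in \cP'$, and that this assignment is a bijection onto $\mS_2$ (using that both sets have the same cardinality $\binom{k}{1}_q$, which follows from the scatteredness results already cited, so surjectivity reduces to injectivity).

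The main obstacle I expect is the bookkeeping in the middle coordinates: the map must simultaneously straighten the exponent pattern of the \emph{defining} system (sending $\U$ to $\U'$, which only touches the second coordinate) and the exponent pattern of the spread elements $\sigma_a$ (which touches all coordinates), and these two constraints must be met by a \emph{single} map. It is not obvious a priori that the same coordinatewise Frobenius twists accomplish both, so the crux is to check compatibility — that raising coordinate $i$ to the power $q^{-(q^{i-1}-1)}$ (interpreted correctly as a Frobenius power modulo $q^k - 1$) is well-defined and consistent across the two families. I would devote the bulk of the verification to confirming that the exponents align, and to handling the degenerate weight accounting so that weight-one points map to weight-one points, guaranteeing the image genuinely lands in $\cP'$ rather than in a higher-weight locus of $L_{\U'}$.
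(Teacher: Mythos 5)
Your overall strategy---coordinatewise Frobenius twists on $\F_{q^k}^r$, then translation through $\varphi$ into an element of $\GL(rk,\F_q)$---is the right one, and it is in fact the paper's. But your concrete setup contains an error that would make your planned verification fail. You identify $\sigma_a$ with the set of vectors $(s,as,s^{q^2},\ldots,s^{q^{r-1}})$ satisfying $s^q=as$, and conclude that each $\sigma_a$ is (the $\F_q$-span of) a single $\F_{q^k}$-line, so that $\mS_1$ sits inside the Desarguesian spread. This misreads Equation~\eqref{eq:l_a}: there $s$ ranges over \emph{all} of $\F_{q^k}$, and the condition $s^q=as$ cuts out only the one-dimensional intersection $\U\cap\sigma_a$. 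For $r\geq 3$ and $k\geq 3$ the space $\sigma_a$ is not closed under multiplication by $\F_{q^k}$ (closure would force $\lambda^{q^2}=\lambda$ for every $\lambda\in\F_{q^k}$), so $\varphi(\sigma_a)$ is \emph{not} a spread element; indeed, whether the spaces $\varphi(\sigma_a)$ can be regarded as field reductions of points is exactly what the remark preceding the theorem leaves open. As a consequence, the task you set yourself---choosing Frobenius twists that turn the projective exponent pattern $(0,q-1,q^2-1,\ldots,q^{r-1}-1)$ into $(0,q-1,0,\ldots,0)$---is impossible: a twist $x\mapsto x^{q^j}$ on coordinate $i$ sends $s^{q^{i-1}-1}$ to $s^{q^j(q^{i-1}-1)}$, and this equals $1$ for all $s$ only if $q^k-1$ divides $q^{i-1}-1$, i.e.\ only if $k\mid i-1$. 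Relatedly, twists by different powers in different coordinates do not map $\F_{q^k}$-lines to $\F_{q^k}$-lines, so the ``semilinear correspondence at the level of lines'' you look for does not exist as framed.

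The repair is to straighten the \emph{parametrization} of $\sigma_a$ rather than a projective representative of it. Take $\psi(x_1,x_2,x_3,\ldots,x_r)=(x_1,x_2,x_3^{q^{k-2}},x_4^{q^{k-3}},\ldots,x_r^{q^{k-r+1}})$, which is $\F_q$-linear and invertible. Since $(s^{q^{i-1}})^{q^{k-i+1}}=s^{q^k}=s$, one gets $\psi(\U)=\U'$ and $\psi(s,as,s^{q^2},\ldots,s^{q^{r-1}})=(s,as,s,\ldots,s)=s\,(1,a,1,\ldots,1)$; thus $\psi(\sigma_a)$ \emph{becomes} the $\F_{q^k}$-line through $P_a=(1,a,1,\ldots,1)$---it is not a line beforehand---and both constraints you worried about are met by this single map. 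The point $P_a$ has weight $1$ in $L_{\U'}$, because $\U'\cap\langle P_a\rangle_{\F_{q^k}}=\{(s,as,s,\ldots,s)\mid s^q=as\}$ is one-dimensional by the Hilbert~90 argument already used in the paper, and Hilbert~90 also shows that $a\mapsto P_a$ is a bijection from the norm-one elements onto $\cP'$; this gives surjectivity directly, with no need for your cardinality count (which, as stated, would anyway require scatteredness of $L_{\U'}$, not of the cited $L_{\U}$). Field reduction of $\psi$ then produces the map required by \cref{def:equivalence}.
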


\begin{proof}
    Define the $\F_q$-linear map
    $$\begin{array}{rccl}
    \psi:&\F_{q^k}^r &\to &\F_{q^k}^r\\
& (x_1,x_2,\ldots,x_r)&\mapsto & \left(x_1,x_2,x_3^{q^{k-2}},x_4^{q^{k-3}},\ldots,x_r^{q^{k-r+1}}\right).\end{array}$$
Clearly, $\U'=\psi(\U)$. Moreover, let $a\in\F_{q^k}$ be an element with field norm $1$ and consider $\sigma_a=\{(s,as,s^{q^2},\ldots,s^{q^{r-2}}, s^{q^{r-1}}) \mid s\in\F_{q^k}\}$.
We have that 
$$\psi(\sigma_a)=\{(s,as,s,\ldots,s)\mid s\in\F_{q^k}\}=\langle P_a\rangle_{\bF_{q^k}},$$ where $P_a=(1,a,1,\ldots,1)\in \cP'$ has weight 1. Applying the field reduction map one gets that $\cS_1$ and $\cS_2$ are equivalent in the sense of \cref{def:equivalence}.
\end{proof}

%%%%%%%%%%%%%%%%%%%%%%%%%%%%%%%%%%%%%%%%%%%%%%%%%%%%%%%%%%%%%%%%%%%%%%%%%%%%%%%%%%%%%%%%%%%
\subsection{Comparison with Multilevel Construction}
The multilevel construction also produces constant dimension subspace codes with any given minimum distance value. In this short subsection, we compare such a construction with the one we have proposed.

When $u\leq \frac{rk}{2}$, whenever there exists a maximum scattered linear set, \cref{prop:constr_field_red} provides an intersecting set of maximum size $m_q(k,r,u)=\binom{u}{1}_q=q^{u-1}+q^{u-2}+\dots+q+1$. We now compare this code with one designed via the multilevel construction.
\begin{proposition}
   Let $U\subseteq\bF_q^{rk}$ be a linear subspace of dimension $u\leq \frac{rk}{2}$. Assume that, using the multilevel construction, we construct a constant dimension subspace code $\C$ with minimum distance $2k$ contained in the Schubert variety $\Omega_U$ Then 
   $$|\C|\leq \sum_{1\leq i\leq s}q^{j_i-1},$$ where $s$ is the number of Schubert cells that are selected and for each cell $i$, the quantity $j_i$ denotes the position of the last pivot.
\end{proposition}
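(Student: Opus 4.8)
The plan is to prove the per-cell estimate $|\C_i|\leq q^{j_i-1}$ and then sum over the selected cells. Write $\C=\bigsqcup_{i=1}^{s}\C_i$, where $\C_i$ collects the codewords of $\C$ lying in the $i$-th selected Schubert cell $\Omega_{d^{(i)}}^0$, with pivot vector $p^{(i)}=(p^{(i)}_1<\dots<p^{(i)}_k)$ and last pivot $j_i=p^{(i)}_k$. Since $m_q(k,r,u)$ depends only on $\dim U$, I may take $U=V_u=\langle e_{rk-u+1},\dots,e_{rk}\rangle$, the $u$-th member of the standard flag. Because $\C$ has minimum subspace distance $2k$ and lifting is an isometry $(\Omega_{d^{(i)}}^0,\dS)\cong(\Mat(\cF_i,\F_q),2\,\dr)$, the set $\C_i$ is in bijection with a Ferrers diagram rank-metric code $\cC_i$ on the associated diagram $\cF_i$ of minimum rank distance $k$. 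It therefore suffices to show $|\cC_i|\leq q^{j_i-1}$ for each selected cell.

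Next I would apply the Singleton-like bound \eqref{eq:Ferrers_bound} with $\delta=k$, giving $|\cC_i|\leq q^{\nu_{\min}(\cF_i,k)}$, and then bound $\nu_{\min}$ from above by evaluating $\nu_t$ at the single index $t=k-1$. Removing the first $k-1$ rows and the rightmost $k-1-(k-1)=0$ columns of $\cF_i$ leaves exactly its bottom row; hence $\nu_{k-1}$ equals the number of dots in that row, which in the RREF is the number of free positions strictly to the right of the last pivot, namely $rk-j_i$. Thus $\nu_{\min}(\cF_i,k)\leq \nu_{k-1}=rk-j_i$, and so $|\cC_i|\leq q^{rk-j_i}$.

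The final step replaces the exponent $rk-j_i$ by the sharper $j_i-1$, and this is where the hypothesis $u\leq\tfrac{rk}{2}$ enters. I would first show that every cell actually contributing a codeword satisfies $j_i\geq rk-u+1$: if instead $j_i\leq rk-u$, then any nonzero $w$ in any $W=\rowsp$ of that cell has its leading nonzero coordinate in a pivot position $\leq j_i\leq rk-u$, so $w\notin U=\langle e_{rk-u+1},\dots,e_{rk}\rangle$; hence $W\cap U=\{0\}$ for every $W$ in the cell, and no codeword of $\C\subseteq\Omega_U$ can arise from it. It follows that $rk-j_i\leq u-1$, and combining with $u\leq\tfrac{rk}{2}$ gives $rk-j_i\leq u-1\leq rk-u\leq j_i-1$. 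Therefore $|\C_i|\leq q^{\nu_{\min}(\cF_i,k)}\leq q^{rk-j_i}\leq q^{j_i-1}$, and summing over the $s$ selected cells yields $|\C|\leq\sum_{i=1}^{s}q^{j_i-1}$. I expect the only delicate point to be this geometric step: translating membership in $\Omega_U$ into the lower bound $j_i>rk-u$ on the last pivot through the RREF/echelon--Ferrers dictionary, and verifying it cell-by-cell so that no cell with a small last pivot can contribute even a single codeword; once that is established, the identification $\nu_{k-1}=rk-j_i$ and the elementary inequality driven by $u\leq\tfrac{rk}{2}$ are routine.
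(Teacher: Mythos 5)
Your proof follows the same route as the paper's: only cells whose last pivot lies among the last $u$ coordinates can contribute a codeword of $\Omega_U$ (your RREF argument for this is correct, and your reduction to $U=V_u$ matches the paper's standing convention of working with the standard flag); within each such cell the codewords lift from a Ferrers diagram rank-metric code of minimum rank distance $k$; and the Singleton-like bound \eqref{eq:Ferrers_bound}, evaluated at $\nu_{k-1}$, the number of dots in the bottom row, which equals $rk-p_k^{(i)}$, bounds the per-cell size. Every step is sound. Where you and the paper part ways is in what $j_i$ means, and this changes the statement you end up proving. In the paper's proof the last $u$ coordinates are enumerated so that a selected cell with ``last pivot in position $j$'' has exactly $j-1$ dots in its bottom row; that is, the paper's $j_i$ equals $rk+1-p_k^{(i)}\in\{1,\dots,u\}$ (the flag index $d_1$ of the cell), not the absolute column index $p_k^{(i)}$ that you use. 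Under the paper's reading, your intermediate inequality $|\C_i|\le q^{rk-p_k^{(i)}}$ \emph{is} the claimed bound $q^{j_i-1}$, and the proof is finished at that point; the hypothesis $u\le\frac{rk}{2}$ is never needed. Your extra final step, trading $q^{rk-j_i}$ for the larger $q^{j_i-1}=q^{p_k^{(i)}-1}$ via $rk-j_i\le u-1\le rk-u\le j_i-1$, is correct arithmetic but deliberately weakens the result: your terms can be as large as $q^{rk-1}$, whereas the paper's terms are at most $q^{u-1}$ with pairwise distinct exponents in $\{0,\dots,u-1\}$. That sharper form is exactly what the subsequent corollary (that $|\C|<\binom{u}{1}_q$ when $r<u$) relies on, so your reading of the bound would not support it. The fix is cosmetic rather than mathematical: stop at $|\C_i|\le q^{rk-p_k^{(i)}}$ and identify $j_i-1$ with the bottom-row dot count $rk-p_k^{(i)}$, as the paper does.
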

\begin{proof}
    First of all, observe that if we want to construct a subspace code with minimum distance $2k$, we can only pick spaces from Schubert cells not sharing a pivot position. The Schubert condition implies the considered cells must have the last pivot in one of the last $u$ coordinates. Hence, we can consider at most $u$ cells. Now, enumerate the last $u$ entries from $1$ to $u$, and assume that we have selected a cell with the last pivot in position $1\leq j\leq u$. This implies that the corresponding Ferrers diagram $\cF$ contains $j-1$ dots in the bottom row, i.e.\ $\nu_{\min}(\cF,k)\leq \nu_{k-1}=j-1$. Hence, by the Singleton-like bound for Ferrers diagram rank-metric codes \eqref{eq:Ferrers_bound}, the number of spaces that we can pick in this cell is at most $q^{j-1}$. 
    Hence, we have that with the multilevel construction we can select $s$ Schubert cells, with $s\leq u$. Moreover, for each cell $1\leq i\leq s$, we have at most $q^{j_i-1}$ spaces, where $j_i\ne j_l$, for every $i\ne l$. We then conclude that the multilevel construction yields a code $\C$ such that 
    $$|\C|\leq \sum_{1\leq i\leq s} q^{j_i-1}.$$
\end{proof}

\begin{corollary}
    Let $u\leq \frac{rk}{2}$. Let $\C$ be a constant subspace code with minimum distance $2k$ in the Schubert variety $\Omega_U$ obtained via the multilevel construction.
    If $r<u$, then $|\C|<\binom{u}{1}_q$.   
\end{corollary}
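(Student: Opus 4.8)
The plan is to combine the inequality from the preceding proposition with a single counting constraint on the number of Schubert cells that the multilevel construction is allowed to use. The proposition already gives
\[
|\C| \le \sum_{1 \le i \le s} q^{j_i - 1},
\]
where $s$ is the number of selected cells and $j_1,\dots,j_s$ are \emph{distinct} values in $\{1,\dots,u\}$ recording the (relabelled) position of the last pivot in each cell. So the whole problem reduces to bounding this sum, which means (a) bounding $s$ and (b) understanding which exponents $j_i-1$ can occur. I would first note that, once $r<u$, it is no longer automatic that all $u$ values $j=1,\dots,u$ can be realized, and this is exactly where the strict inequality will come from.

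The key step, and the only one carrying genuinely new information beyond the proposition, is the observation that $s \le r$. Indeed, to achieve minimum subspace distance $2k$ the multilevel construction may only select Schubert cells whose pivot sets are pairwise disjoint (as recalled in the proof of the proposition); each such cell contributes exactly $k$ distinct pivot positions, all lying in $\{1,\dots,rk\}$. Hence the $s$ chosen cells occupy $sk$ distinct positions, forcing $sk \le rk$, i.e.\ $s \le r$. Together with the hypothesis $r<u$ this yields $s \le r < u$, so strictly fewer than $u$ of the possible last-pivot positions can be used.

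With $s \le r$ in hand, I would upper-bound the sum by picking the $r$ largest admissible exponents. Since $q^{j-1}$ is increasing in $j$ and the $j_i$ are distinct elements of $\{1,\dots,u\}$,
\[
|\C| \le \sum_{i=1}^{s} q^{j_i - 1} \le \sum_{j = u-r+1}^{u} q^{j-1} = \sum_{\ell = u-r}^{u-1} q^{\ell}.
\]
Finally I would compare this with $\binom{u}{1}_q = \sum_{\ell=0}^{u-1} q^{\ell}$: the difference is precisely $\sum_{\ell=0}^{u-r-1} q^{\ell}$, a sum of strictly positive terms that is nonempty exactly because $r<u$ (so that $u-r-1 \ge 0$). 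This gives $|\C| < \binom{u}{1}_q$, as required.

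I do not expect any real obstacle: the only non-routine point is recognizing the cell-count bound $s \le r$, after which the conclusion is an elementary geometric-series comparison. The one place to stay careful is the bookkeeping of the pivot-position relabelling, so that ``position $j$'' consistently corresponds to $j-1$ dots in the bottom row and hence the exponent $q^{j-1}$; but this convention is inherited directly from the proof of the preceding proposition and requires no new argument.
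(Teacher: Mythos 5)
Your proof is correct and follows essentially the same route as the paper: the key observation in both is that disjointness of pivot sets forces at most $rk/k = r$ Schubert cells to be selected, so the bound from the preceding proposition is a sum of at most $r < u$ distinct powers of $q$, hence strictly less than $\binom{u}{1}_q = \sum_{\ell=0}^{u-1} q^{\ell}$. Your explicit geometric-series comparison just spells out the final step that the paper leaves terse.
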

\begin{proof}
In the multilevel construction, we select different Schubert cells and each cell is uniquely determined by the positions of the pivots in the reduced row echelon form. We choose $k$ pivot positions among $rk$ possibilities. Since all the pivot positions have to be different, we can select at most $\frac{kr}{k}=r$ cells. Hence, the upper bound on the cardinality of $|\C|$ contains at most $r$ terms. Hence, when $r<u$ it gives a quantity strictly smaller than $\binom{u}{1}_q$.
\end{proof}

\begin{remark}
    When $r\geq u$, the multilevel construction might provide a maximum size intersecting set with respect to a $u$-dimensional space. However, when $r<u$, the construction from \cref{prop:constr_field_red} and the one from \cref{cor:construction_norm} for the special case $u=k$ provide a larger intersecting set than the one from the multilevel construction. Hence, we have found cases in which we can design a larger constant dimension subspace code with minimum distance $2k$ in the Schubert variety $\Omega_U$. We illustrate this in the following example. 
\end{remark}

\begin{ex}
Let $q=2, k=u=3, r=2$. 
As we already pointed out, when $r=2$ the constructions from \cref{prop:constr_field_red} and \cref{cor:construction_norm} are equivalent. 
Let $\bF_{2^3}= \F_2(\omega)$, where $\omega^3=\omega+1$. Let $\U=\{(s,s^2)\mid s\in\F_8\}$. Then $\U$ meets each of the spaces $\sigma_a=\{(s,as):s \in \F_8\}$ in a one-dimensional space, where $a$ is an element of $\F_8$ of norm $1$. There are $7$ many spaces $\sigma_a$, which then give rise to a subspace code in the Schubert variety $\Omega_d$, with Schubert condition $d=(3,5,6)$. Instead, the multilevel construction with these parameters only achieves a code size equal to $q^{u-1}+1=5$. 
\end{ex}

%%%%%%%%%%%%%%%%%%%%%%%%%%%%%%%%%%%%%%%%%%%%%%%%%%%%%%%%%%%%%%%%%%%%%%%%%%%%%%%%%%%%%%%%%%%
%%%%%%%%%%%%%%%%%%%%%%%%%%%%%%%%%%%%%%%%%%%%%%%%%%%%%%%%%%%%%%%%%%%%%%%%%%%%%%%%%%%%%%%%%%%
\section{A More General Problem}\label{sec:generalization}
%%%%%%%%%%%%%%%%%%%%%%%%%%%%%%%%%%%%%%%%%%%%%%%%%%%%%%%%%%%%%%%%%%%%%%%%%%%%%%%%%%%%%%%%%%%
In this section, we give a generalization of \cref{def:intersecting_set} and provide some initial bounds on the size of Schubert subspace codes with respect to more general Schubert varieties and that do not necessarily achieve the maximum subspace distance of $2k$.

Let $U$ be a $u$-dimensional subspace of $\bF_q^{rk}$, let $\ell$ be an integer such that $0\leq \ell \leq k$ and define
$$\Omega_{U,\ell} := \{W\in\Gr_q(k,rk) | \dim(W\cap U)\geq \ell\}.$$
For $\ell=1$, we have that $\Omega_{U,1}=\Omega_U$.

In terms of Schubert condition, this corresponds to having $d_\ell=u$. Hence, $\Omega_{U,\ell}$ corresponds to the Schubert variety $\Omega_d\subseteq\Gr_q(k,rk)$ with Schubert condition $d=(u-\ell+1,\dots, u, rk-k+\ell+1,\dots,rk)$.
We are now ready to define a generalization of the concept of intersection set, which will be a Schubert subspace code with respect to this class of Schubert varieties.

\begin{definition} \label{def:lt-intersecting_set}
    Let $k,r,u,\ell,t$ be integers such that $k,r\geq 2$, $u\leq\frac{rk}{2}$, $0\leq \ell\leq k$ and $0\leq t\leq k-1$. Let $U$ be a $u$-dimensional subspace of $\bF_q^{rk}$.  We say that a Schubert subspace code $\mS\subseteq\Omega_{U,\ell}$ is an \textbf{$(\ell,t)$-intersecting set with respect to~$U$} if for every $S_i, S_j\in\mathcal{S}$ s.t.\ $S_i\neq S_j$, it holds that $\dim(S_i\cap S_j)\leq t$.
\end{definition}
We denote by $m_q(k,r,u,\ell,t)$ the largest size of an $(\ell,t)$-intersecting set with respect to $U$.
Notice that a $(1,0)$-intersecting set is an intersecting set according to \cref{def:intersecting_set}.

\begin{remark}
Also in this case it is immediate to see that $m_q(k,r,u,\ell,t)$ does not depend on the choice of $U$, but only on its dimension $u$. If $t>0$ though, this is no longer a partial spread, since different subspaces are allowed to intersect. Similar problems have been considered in \cite{longobardi2022sets}, where the authors studied subspaces that pass through a common space, with only two allowed pairwise intersection dimension values.
\end{remark}

Notice that \cref{def:lt-intersecting_set} allows the subspaces in the code to intersect in dimension up to $t$. Hence, the minimum distance of the subspace code associated with an $(\ell,t)$-intersecting set has to be $\dS(\cS)\geq 2(k-t)$.

We are interested in finding bounds on $m_q(k,r,u,\ell,t)$ and in constructing maximum size $(\ell,t)$-intersecting sets. 
Unfortunately, it is not straightforward to find non-trivial upper bounds on the size $m_q(k,r,u,\ell,t)$. However,  the multilevel construction introduced by Etzion and Silberstein in \cite{etzion2009error} is still a valid construction of $(\ell,t)$-intersecting set that we can use as a lower bound. A large family of subspaces can be found using such construction when the Etzion-Silberstein conjecture turns out to be true. In order to state the bound, let $k,r,u,\ell,t$ be integers with the usual assumptions of this section and consider the Ferrers diagram $\cF$ in \cref{fig:FerrersF}.

\begin{figure}[htp] \begin{tikzpicture}[x=0.6cm,y=0.6cm]

\draw [decorate,decoration={brace,amplitude=5pt,raise=1.5ex}]
  (-0.3,5) -- (6.3,5) node[midway,yshift=1.5em]{$(r-1)k$};
\draw [decorate,decoration={brace,amplitude=5pt,mirror,raise=1.5ex}]
  (-0.3,3) -- (3.5,3) node[midway,xshift=-2em, yshift=-1.5em]{$(r-1)k-(u-\ell)$};
\draw [decorate,decoration={brace,amplitude=5pt,mirror,raise=1.5ex}]
  (3.7,1) -- (6.3,1) node[midway,yshift=-1.5em]{$u-\ell$};
\draw [decorate,decoration={brace,amplitude=5pt,raise=1.5ex}]
  (0,2.7) -- (0,5.3) node[midway,xshift=-2.5em]{$k-\ell$};
\draw [decorate,decoration={brace,amplitude=5pt,raise=1.5ex}]
  (4,0.7) -- (4,2.5) node[midway,xshift=-1.5em,yshift=-0.5em]{$\ell$};
\draw [decorate,decoration={brace,amplitude=5pt,mirror,raise=1.5ex}]
  (6,0.7) -- (6,5.3) node[midway,xshift=1.5em]{$k$};

\draw[fill=black] (0,5) circle (2pt);
\draw[color=black] (0,4.2) node {$\vdots$};
\draw[fill=black] (0,3) circle (2pt);

\draw[color=black] (1,5) node {$\cdots$};
\draw[color=black] (1,3) node {$\cdots$};

\draw[color=black] (3,5) node {$\cdots$};
\draw[color=black] (3,3) node {$\cdots$};

\draw[fill=black] (4,5) circle (2pt);
\draw[color=black] (4,4.2) node {$\vdots$};
\draw[fill=black] (4,3) circle (2pt);
\draw[color=black] (4,2.2) node {$\vdots$};
\draw[fill=black] (4,1) circle (2pt);

\draw[color=black] (5,5) node {$\cdots$};
\draw[color=black] (5,3) node {$\cdots$};
\draw[color=black] (5,1) node {$\cdots$};

\draw[fill=black] (6,5) circle (2pt);
\draw[color=black] (6,4.2) node {$\vdots$};
\draw[fill=black] (6,3) circle (2pt);
\draw[color=black] (6,2.2) node {$\vdots$};
\draw[fill=black] (6,1) circle (2pt);
\end{tikzpicture} 
\caption{Ferrers diagram $\cF$.}
\label{fig:FerrersF}
\end{figure}

\begin{proposition} \label{prop:lower_bound}
Assume the Etzion-Silberstein conjecture holds. Then,
    $$ m_q(k,r,u,\ell,t) \geq q^{\nu_{\min}(\cF,k-t)},$$
    where $\cF$ is the Ferrers diagram in \cref{fig:FerrersF} and $\nu_{\min}(\cF,k-t)=\min\{\nu_0,\nu_{k-t-1}, \nu_{k-\ell}, \nu_{k+\ell-t-u-1}\}$, with 
    \begin{align*}
         \nu_0 &=\begin{cases}
        (rk-k-u+\ell)(k-\ell) + k(t+1-\ell-k+u) & \text{if } t+1\geq \ell+k-u,\\
        (rk-2k+t+1)(k-\ell) & \text{if } t+1< \ell+k-u;
    \end{cases}\\
    \nu_{k-t-1} &=\begin{cases}
        \ell(u-\ell) + (r-1)k(t+1-\ell) & \text{if } t+1\geq \ell,\\
        (t+1)(u-\ell) & \text{if } t+1< \ell;
    \end{cases}\\
      \nu_{k-\ell} &= \begin{cases}
        0 & \text{if } t+1\leq 2\ell-u \text{ and } t+1<\ell, \\
        \ell(u-2\ell+t+1) & \text{if }2\ell-u< t+1 < \ell;
    \end{cases}\\
    \nu_{k+\ell-t-1-u} &= \begin{cases}
        0 & \text{if } t+1\leq 2\ell-u \text{ and } t+1<\ell+k-u , \\
        ((r-1)k-u+\ell)(u+t+1-2\ell) & \text{if } 2\ell-u<t+1 < \ell+k-u.
    \end{cases}
    \end{align*}
\end{proposition}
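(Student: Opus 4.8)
The plan is to establish the claimed bound through a single instance of the multilevel construction, and then to reduce the whole statement to an explicit evaluation of $\nu_{\min}(\cF,k-t)$.

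First I would fix $U=V_u$ in the standard flag and single out the Schubert cell $\Omega_d^0$ attached to the Schubert condition $d=(u-\ell+1,\dots,u,rk-k+\ell+1,\dots,rk)$ describing $\Omega_{U,\ell}$. Using \eqref{eq:rel_pi_di}, its pivot vector is $p=(1,\dots,k-\ell,rk-u+1,\dots,rk-u+\ell)$, and counting the free entries (the non-pivot columns lying to the right of each pivot) shows that the associated echelon-Ferrers form has exactly $(r-1)k$ dots in each of the top $k-\ell$ rows and $u-\ell$ dots in each of the bottom $\ell$ rows; that is, its Ferrers diagram is precisely $\cF$ of \cref{fig:FerrersF}. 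Since every $W\in\Omega_d^0\subseteq\Omega_d$ satisfies $\dim(W\cap V_u)\ge\ell$, the cell lies in $\Omega_{U,\ell}$. Invoking the isometry $(\Omega_d^0,\dS)\cong(\Mat(\cF,\bF_q),2\cdot\dr)$, I would lift a Ferrers diagram code of minimum rank distance $\delta=k-t$: the lifted subspaces then have pairwise subspace distance at least $2(k-t)$, hence pairwise intersection of dimension at most $t$, so the lift is an $(\ell,t)$-intersecting set with respect to $U$. Assuming the Etzion--Silberstein conjecture, an MFD code of size $q^{\nu_{\min}(\cF,k-t)}$ exists, giving $m_q(k,r,u,\ell,t)\ge q^{\nu_{\min}(\cF,k-t)}$.

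It then remains to evaluate $\nu_{\min}(\cF,k-t)=\min_{0\le i\le k-t-1}\nu_i$. Writing $c=k-t-1-i$ for the number of deleted rightmost columns, I would express $\nu_i$ as the number of dots surviving the deletion of the top $i$ rows and the rightmost $c$ columns: a surviving wide row contributes $(r-1)k-c$ dots and a surviving narrow row contributes $\max\{0,(u-\ell)-c\}$ dots, so that $\nu_i=\max\{0,k-\ell-i\}\cdot((r-1)k-c)+\min\{\ell,k-i\}\cdot\max\{0,(u-\ell)-c\}$. The decisive observation is that the nonsmooth factors bend only at $i=k-\ell$ (where $\max\{0,k-\ell-i\}$ and $\min\{\ell,k-i\}$ change) and at $i=k+\ell-u-t-1$ (where $(u-\ell)-c$ changes sign). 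On each of the resulting subintervals $\nu_i$ is either affine or a downward parabola in $i$—a product of two linear factors of opposite slope—hence concave; a concave function attains its minimum over a closed interval at an endpoint. Consequently the minimum over $0\le i\le k-t-1$ is attained at one of the four integers $0,\;k-t-1,\;k-\ell,\;k+\ell-u-t-1$ lying in this range, and substituting each into the displayed formula for $\nu_i$ reproduces exactly the four case-split expressions in the statement.

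The main obstacle is the bookkeeping of this last step: one must check, via the indicated inequalities ($t+1\gtrless\ell$, $t+1\gtrless\ell+k-u$, $t+1\gtrless 2\ell-u$), which of the two interior breakpoints actually lie in $[0,k-t-1]$ and which branch of each $\max\{0,\cdot\}$ is active, so that the degenerate (value-$0$) cases are correctly separated from the generic ones. No single step is deep; the care lies in confirming that the piecewise-concave structure leaves no interior minimizer and that the case conditions match the geometry of the row and column deletions.
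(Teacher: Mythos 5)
Your proposal is correct and follows essentially the same route as the paper: fix $U=V_u$ in the standard flag, lift a Ferrers diagram rank-metric code of minimum rank distance $k-t$ (an MFD code, whose existence is guaranteed by the Etzion--Silberstein conjecture) inside the single Schubert cell of $\Omega_{U,\ell}$ whose echelon-Ferrers form yields the diagram $\cF$ of \cref{fig:FerrersF}, and then evaluate $\nu_{\min}(\cF,k-t)$. Your piecewise-concavity argument, showing the minimum of $\nu_i$ must occur at one of the indices $0$, $k-t-1$, $k-\ell$, $k+\ell-u-t-1$, is in fact a cleaner and more rigorous justification of the minimization step than the paper's informal, shape-based case analysis, but the substance and the resulting case formulas are the same.
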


\begin{proof}
    Without loss of generality, we can do the construction with respect to the standard flag and assume $$U = \rowsp(\boldsymbol{0}_{u\times (rk-u)} \ |\ \mathrm{Id}_u) .$$
    We then note that using the multilevel construction one can construct a constant dimension subspace code $\cS$ with minimum distance value $2(k-t)$, contained in the Schubert variety $\Omega_{U,\ell}$.  In our language, this corresponds to an $(\ell,t)$-intersecting set with respect to $U$.     
    With respect to the standard flag, we can write the Schubert variety $\Omega_{U,\ell}$ as the one generated by the following cell, written in echelon-Ferrers form:
\[
M =
\left( \begin{array}{@{}c|c@{}}
	\begin{array}{@{}c|c@{}}
		\begin{array}{c|c@{}}
			\mathrm{Id}_{k-\ell} & \bullet_{(k-\ell)\times((r-1)k-u+\ell)} 
		\end{array} 
	& \boldsymbol{0}_{(k-\ell)\times \ell} \\
	\cmidrule[0.2pt]{1-2}
	\boldsymbol{0}_{\ell\times(rk-u)} & \mathrm{Id}_\ell \\
	\end{array}
& \bullet_{k\times(u-\ell)}
\end{array} \right).
\]
Since all other cells of the Schubert variety correspond to Ferrers diagrams with fewer dots, the cardinality of a Ferrers diagram rank-metric code constructed on the largest cell is one of the largest. Hence, in the following, we will only focus on subspaces in such largest cell. This bounds from below the size of a code constructed over all cells of the Schubert variety.

Among the subspaces with echelon-Ferrers form $M$, we now want to select a large number of subspaces that are $(\ell,t)$-intersecting sets according to \cref{def:lt-intersecting_set}, i.e.\ which pairwise intersect in a space of dimension at most $t$. This condition means that the subspace distance between any two subspaces $S_i,S_j$ is 
$$\dd_S(S_i,S_j) \geq 2(k-t).$$
Since we are restricting ourselves to one Schubert cell, we can study rank-metric codes supported on the corresponding Ferrers diagram. It follows that $m_q(k,r,u,\ell,t)\geq q^{\dim(\cF,\delta)}$, where $\dim(\cF,\delta)$ is the maximum dimension of a code with minimum rank distance
$$ \delta = k-t $$
and supported on the Ferrers diagram $\cF$ coming from the echelon-Ferrers form $M$, which is a $k\times(r-1)k$ Ferrers diagram such that the top $k-\ell$ rows have $(r-1)k$ dots, and the remaining $\ell$ rows have $k-\ell$ dots. See \cref{fig:FerrersF}.

We now compute $\nu_{\min}(\cF,k-t)=\min_i\{\nu_i\}$, where for $0\leq i\leq k-t-1$, $\nu_i$ is the number of dots that are not contained in the first $i$ rows and are not contained in the rightmost $k-t-1-i$ columns. Analogously, $\nu_i$ is the number of dots contained both in the bottom $k-i$ rows and in the leftmost $(r-2)k+t+1+i$ columns. In particular, $\nu_0$ is the number of dots in the leftmost $(r-2)k+t+1$ columns, while $\nu_{k-t-1}$ is the number of dots in the bottom $t+1$ rows.

Hence:
\begin{align*}
    \nu_0&=\begin{cases}
        (rk-k-u+\ell)(k-\ell) + k(t+1-\ell-k+u) & \text{if } t+1\geq \ell+k-u,\\
        (rk-2k+t+1)(k-\ell) & \text{if } t+1< \ell+k-u;
    \end{cases}\\
    \nu_{k-t-1} &=\begin{cases}
        \ell(u-\ell) + (r-1)k(t+1-\ell) & \text{if } t+1\geq \ell,\\
        (t+1)(u-\ell) & \text{if } t+1< \ell.
    \end{cases}
\end{align*}
If $t+1\geq\ell$, computations show that $\nu_0\geq\nu_{k-t-1}$, hence $\nu_0$ can not be the minimum $\nu_i$.

Due to the shape of the Ferrers diagram, if $t+1$ is large enough the minimum $\nu_i$ is obtained for either the first or last index.

If $t+1<\ell$ though, decreasing the index from $i=k-t-1$ does not necessarily imply that $\nu_i$ also counts dots from the top $k-\ell$ longer rows. The last index for which this does not happen is $i=k-\ell$. In this case, we get:
\begin{equation*}
    \nu_{k-\ell} = \begin{cases}
        0 & \text{if } t+1\leq 2\ell-u \text{ and } t+1<\ell, \\
        \ell(u-2\ell+t+1) & \text{if }2\ell-u< t+1 < \ell. 
    \end{cases}
\end{equation*}
Comparing this value with the one of $\nu_{k-t-1}$ yields
$$ \min\{\nu_{k-\ell},\nu_{k-t-1}\} = \begin{cases}
    0 & \text{if }\ell\leq u< 2\ell \text{ and }t+1\leq 2\ell-u, \\
    \ell(u-2\ell+t+1) & \text{if }\ell\leq u< 2\ell \text{ and }2\ell-u\leq t+1<\ell, \\
    (t+1)(u-\ell) & \text{if }2\ell\leq u \text{ and }t+1<\ell, \\  
    \ell(u-\ell) + (r-1)k(t+1-\ell) & \text{if }t+1\geq\ell.
\end{cases}$$

Analogously, if $t+1<\ell+k-u$, increasing the index from $i=0$ does not necessarily imply that $\nu_i$ also counts dots from the right $u-\ell$ longer columns. The last index for which this does not happen is $i=k+\ell-t-1-u$. In this case, we get:
\begin{equation*}
    \nu_{k+\ell-t-1-u} = \begin{cases}
        0 & \text{if } t+1\leq 2\ell-u \text{ and } t+1<\ell+k-u , \\
        ((r-1)k-u+\ell)(u+t+1-2\ell) & \text{if } 2\ell-u<t+1 < \ell+k-u; 
    \end{cases}
\end{equation*}
Combining all the above arguments, we get the statement.
\end{proof}

In order to provide new constructions of $(\ell,t)$-intersecting sets which are comparable with the multilevel one, we need to build codes whose size is at least $q^{\dim(\mathcal{F},\delta)}$.
Clearly, in case of a $(1,0)$-intersecting set with respect to a $u$-dimensional space $U$ with $u\leq \frac{kr}{2}$, we have an order of magnitude equal to $q^{u-1}$. In this case, we can be more precise and compare the multilevel construction with the one we proposed in \cref{sec:prob_formulation}.

%%%%%%%%%%%%%%%%%%%%%%%%%%%%%%%%%%%%%%%%%%%%%%%%%%%%%%%%%%%%%%%%%%%%%%%%%%%%%%%%%%%%%%%%%%%
\subsection{An upper bound on \texorpdfstring{$m_q(k,r,u,\ell,t)$}{m(k,r,u,l,t)}}
As we already pointed out above, it is not easy in general to provide a non-trivial upper bound for $m_q(k,r,u,\ell,t)$. However, in the case when $t+1\leq\ell$ we can give the following easy concrete upper bound as a generalization of \cref{thm:upp_bound}.

\begin{proposition} \label{prop:upper_bound}
    If $t+1\leq\ell$, then $$m_q(k,r,u,\ell,t) \leq \binom{u}{\ell}_q.$$
\end{proposition}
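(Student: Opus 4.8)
The plan is to generalize the counting argument of \cref{thm:upp_bound}, where the key observation there was that distinct codewords must intersect the fixed subspace $U$ in \emph{distinct} one-dimensional subspaces. Here, since every $S\in\mS$ lies in $\Omega_{U,\ell}$, each codeword satisfies $\dim(S\cap U)\geq\ell$, so each $S$ determines at least one $\ell$-dimensional subspace contained in $S\cap U\subseteq U$. The strategy is to attach to each codeword such an $\ell$-subspace of $U$ and then argue that the map sending a codeword to this $\ell$-subspace is injective under the hypothesis $t+1\leq\ell$. Since the number of $\ell$-dimensional subspaces of the $u$-dimensional space $U$ is exactly $\binom{u}{\ell}_q$, injectivity yields the bound immediately.

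The core step is the injectivity claim, and this is where the hypothesis $t+1\leq\ell$ enters. First I would fix, for each $S_i\in\mS$, an $\ell$-dimensional subspace $T_i\subseteq S_i\cap U$ (choosing one arbitrarily if the intersection has dimension larger than $\ell$). Suppose for contradiction that two distinct codewords $S_i\neq S_j$ produce the same $\ell$-subspace $T_i=T_j=:T$. Then $T\subseteq S_i$ and $T\subseteq S_j$, so $T\subseteq S_i\cap S_j$, which forces $\dim(S_i\cap S_j)\geq\ell$. But the defining property of an $(\ell,t)$-intersecting set (\cref{def:lt-intersecting_set}) requires $\dim(S_i\cap S_j)\leq t$. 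Combining these gives $\ell\leq t$, i.e. $t+1\leq\ell$ fails to hold strictly --- more precisely, $\ell\leq t$ contradicts the hypothesis $t+1\leq\ell$ (which is equivalent to $\ell\geq t+1$, i.e. $\ell>t$). Hence the assignment $S_i\mapsto T_i$ is injective, and $|\mS|\leq\binom{u}{\ell}_q$.

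I expect the main (and only) subtlety to be the clean handling of the arbitrary choice of $T_i$ when $\dim(S_i\cap U)>\ell$: the argument must not depend on \emph{how} the $\ell$-subspace is selected, only on the fact that \emph{some} common $\ell$-subspace of $S_i$ and $U$ exists. This is harmless because the contradiction is derived purely from the containment $T\subseteq S_i\cap S_j$ regardless of which $T$ was chosen; the injectivity is of the map after fixing any such selection, which suffices for the counting bound. No further machinery (linear sets, field reduction, Ferrers diagrams) is needed --- this is a direct pigeonhole/counting argument, exactly parallel to \cref{thm:upp_bound}, with one-dimensional intersections replaced by $\ell$-dimensional ones and the trivial-intersection condition replaced by the bound $\dim(S_i\cap S_j)\leq t<\ell$.
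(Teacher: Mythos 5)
Your proof is correct and follows essentially the same argument as the paper: assign to each codeword an $\ell$-dimensional subspace of its intersection with $U$, observe that two distinct codewords cannot share such a subspace because $\dim(S_i\cap S_j)\leq t<\ell$, and conclude by counting the $\binom{u}{\ell}_q$ subspaces of $U$ of dimension $\ell$. In fact your write-up is slightly more careful than the paper's, which loosely says each codeword ``intersects $U$ in an $\ell$-dimensional space'' whereas the definition of $\Omega_{U,\ell}$ only guarantees $\dim(S\cap U)\geq\ell$; your explicit handling of the arbitrary choice of $T_i$ closes that small gap.
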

\begin{proof}
Let $t+1\leq\ell$ and let $\cS$ be an $(\ell,t)$-intersecting set with respect to a $u$-dimensional subspace $U$. Then every $S\in\cS$ intersects $U$ in an $\ell$-dimensional space and no two $S_i,S_j\in\cS$ can intersect $U$ in the same $\ell$-dimensional space since they intersect each other in a space of dimension at most $t$ and $t<\ell$. Since $U$ contains exactly $\binom{u}{\ell}_q$ many $\ell$-dimensional subspaces, we conclude.
\end{proof}

\begin{remark}
    Let $\ell=1, t=0$. Then,  the upper bound from \cref{prop:upper_bound} is tight. Indeed, the constructions given in \cref{sec:prob_formulation} provide maximum intersecting sets of size $\binom{u}{1}_q$ with respect to a $u$-dimensional subspace of $\F_q^{rk}$.
\end{remark}
\begin{remark}
    Let $\ell=t+1$. \cref{prop:lower_bound} yields the lower bound $m_q(k,r,u,\ell,\ell-1)\geq q^{\ell(u-\ell)}$. In this case, the order of magnitude of this lower bound is the same as the one of the upper bound of \cref{prop:upper_bound}.
\end{remark}

%%%%%%%%%%%%%%%%%%%%%%%%%%%%%%%%%%%%%%%%%%%%%%%%%%%%%%%%%%%%%%%%%%%%%%%%%%%%%%%%%%%%%%%%%%%
%%%%%%%%%%%%%%%%%%%%%%%%%%%%%%%%%%%%%%%%%%%%%%%%%%%%%%%%%%%%%%%%%%%%%%%%%%%%%%%%%%%%%%%%%%%
\section{Conclusion and Open Questions}\label{sec:conclusion}
%%%%%%%%%%%%%%%%%%%%%%%%%%%%%%%%%%%%%%%%%%%%%%%%%%%%%%%%%%%%%%%%%%%%%%%%%%%%%%%%%%%%%%%%%%%
In this paper we have introduced for the first time, to the best of our knowledge, the problem of constructing constant dimension subspace codes restricted to Schubert varieties. Moreover, we have introduced the notion of Schubert subspace codes with respect to a Schubert variety $\Omega_d$ and of $(\ell,t)$-intersecting sets in $\Gr_q(k,rk)$ with respect to a given subspace $U$ of the vector space $\F_q^{rk}$. This is essentially a geometrical translation of the same problem, which, however, looks more practical to use. In general, we are interested in large intersecting sets. When the parameter $t$ is equal to $0$, i.e.\ when the considered spaces are all disjoint, we provide an upper bound for the size of $(1,0)$-intersecting sets with respect to a $u$-dimensional space $U$, with $u\leq \frac{rk}{2}$. We moreover give a construction of sets achieving the upper bound. The general case is essentially open, but we give the order of magnitude that the intersecting set should have in order to be comparable with one obtained via the known multilevel construction. 
Our study gives rise to plenty of further research questions. We list a few of them.

\begin{enumerate}
    \item We considered so far subspace codes contained in $\Gr_q(k,rk)$. This choice is due to the technicality of the construction in \cref{prop:constr_field_red}, which requires applying the field reduction map. However, it is possible to consider more generally $\Gr_q(k,n)$, where $n$ is not necessarily a multiple of $k$. When $\ell=1$ and $t=0$, while the upper bound from \cref{thm:upp_bound} stays the same, the construction from \cref{prop:constr_field_red} can no longer be applied. Can we find an alternative construction of maximum size intersecting sets with respect to a $u$-dimensional subspace of $\F_q^n$? 
    \item The construction given in \cref{prop:constr_field_red} makes use of the concept of scattered linear sets. A necessary condition for the existence of these objects is that they arise from $u$-dimensional $q$-systems, with $u\leq\frac{rk}{2}$. This limits our study only to some Schubert varieties. Is it also possible to provide a construction that does not use linear sets and also covers the case $u>\frac{rk}{2}$? 
    \item In \cref{prop:transformation}, we show that it is possible to obtain the construction given in \cref{cor:construction_norm} as field reduction of points of weight one of a linear set. It remains an open question to show whether the construction from \cref{prop:constr_field_red} and the one in \cref{cor:construction_norm} are or are not equivalent according to \cref{def:equivalence}.
    \item Is it possible to give an upper bound for $m_q(k, r, u, \ell, t)$ which is more general than the one given in \cref{prop:upper_bound}? 
    \item Is it possible to generalize the construction from \cref{prop:constr_field_red} in order to construct $(\ell,t)$-intersecting sets for general parameters?  
\end{enumerate}

\bigskip
\bigskip
%%%%%%%%%%%%%%%%%%%%%%%%%%%%%%%%%%%%%%%%%%%%%%%%%%%%%%%%%%%%%%%%%%%%%%%%%%%%%%%%%%%%%%%%%%%
\section*{Acknowledgements}
The authors are thankful to Alessandro Neri and John Sheekey for fruitful discussions.\\
Gianira~N. Alfarano is supported by the Swiss National Foundation through grant no. 210966. Joachim Rosenthal and Beatrice Toesca are supported by the Swiss National Foundation through grant no. 212865.

%%%%%%%%%%%%%%%%%%%%%%%%%%%%%%%%%%%%%%%%%%%%%%%%%%%%%%%%%%%%%%%%%%%%%%%%%%%%%%%%%%%%%%%%%%%
\bibliographystyle{abbrv}
\bibliography{bibliography.bib}

\end{document}